\newif\ifshowtournaments
\newcommand{\pref}{\succcurlyeq} 
\renewcommand*{\leq}{\leqslant}
\renewcommand*{\ge}{\geqslant}
\renewcommand*{\geq}{\geqslant}
\newcommand{\bigland}{\bigwedge} 
\newcommand{\biglor}{\bigvee} 
\newcommand{\formulaname}[1]{\textsf{#1}}
\renewcommand{\a}{\texttt a}
\renewcommand{\b}{\texttt b}
\renewcommand{\c}{\texttt c}
\renewcommand{\d}{\texttt d}
\newcommand{\x}{\mathtt x}
\newcommand{\y}{\mathtt y}
\newcommand{\electorates}{{\mathcal E(\mathcal N)}}
\newcommand{\yes}[1]{{\color{black!70!green}{\fontseries{b}\selectfont#1}}}
\newcommand{\no}[1]{{\color{black!70}\fontseries{l}\selectfont #1}}
\newcommand{\tc}[1][]{\ifthenelse{\equal{#1}{}}{\mathit{TC}}{\mathit{TC}(#1)}}
\newcounter{remark}
\renewcommand{\paragraph}[1]{\par\medskip\noindent\textbf{#1}\quad}
\newcolumntype{P}[1]{>{\centering\arraybackslash}p{#1}}
\newcommand{\supp}{\operatorname{supp}}
\begin{document}

\title{Optimal Bounds for the No-Show Paradox via SAT Solving}

\numberofauthors{3}

\author{
\alignauthor
 Felix Brandt\\
 	\affaddr{TU M\"unchen}\\
 	\affaddr{Munich, Germany}\\
 	\email{brandtf@in.tum.de}
 \alignauthor
 Christian Geist\\
 	\affaddr{TU M\"unchen}\\
 	\affaddr{Munich, Germany}\\
 	\email{geist@in.tum.de}
 \alignauthor
 Dominik Peters\\
 	\affaddr{University of Oxford}\\
     \affaddr{Oxford, UK}\\
      \email{dominik.peters@cs.ox.ac.uk}
}

\maketitle

\begin{abstract}
Voting rules allow multiple agents to aggregate their preferences in order to reach joint decisions. Perhaps one of the most important desirable properties in this context is \emph{Condorcet-consistency}, which requires that a voting rule should return an alternative that is preferred to any other alternative by some majority of voters. Another desirable property is \emph{participation}, which requires that no voter should be worse off by joining an electorate. A seminal result in social choice theory by \citet{Moul88b} has shown that Condorcet-consistency and participation are incompatible whenever there are at least~4 alternatives and~25 voters. We leverage SAT solving to obtain an elegant human-readable proof of Moulin's result that requires only~12 voters. Moreover, the SAT solver is able to construct a Condorcet-consistent voting rule that satisfies participation as well as a number of other desirable properties for up to 11~voters, proving the optimality of the above bound. We also obtain tight results for set-valued and probabilistic voting rules, which complement and significantly improve existing theorems.
\end{abstract}

\keywords{Computer-aided theorem proving; social choice theory; SAT; no-show paradox; participation; Condorcet}

\section{Introduction}

Whenever a group of autonomous software agents or robots needs to decide on a joint course of action in a fair and satisfactory way, they need to aggregate their preferences. A common way to achieve this is to use voting rules.
Voting rules are studied in detail in social choice theory and are coming under increasing scrutiny from computer scientists who are interested in their computational properties or want to utilize them in multiagent systems \citep[see, \eg][]{Roth15a,BCE+14a}.

In social choice theory, voting rules are usually compared using desirable properties (so-called axioms) that they may or may not satisfy. There are a number of well-known impossibility theorems---among which Arrow's impossibility is arguably the most famous---which state that certain axioms are incompatible with each other. These results, which show the non-existence of voting rules that satisfy a given set of axioms, are important because they clearly define the boundary of what can be achieved at all. 
This applies to the explicitly stated axioms as well as implicit ones such as boundaries on the number of voters or alternatives. For instance, if there are only two alternatives, Arrow's theorem does not apply and there are many voting rules, including majority rule, that satisfy the conditions used in Arrow's theorem.
One impossibility that requires unusually high bounds on the number of voters and alternatives is Moulin's \emph{no-show paradox} \citep{Moul88b}, which states that the axioms of Condorcet-consistency and participation are incompatible whenever there are at least 4 alternatives and 25 voters. 
Moulin proves that the bound on the number of alternatives is tight by showing that the maximin voting rule (with lexicographic tie-breaking) satisfies the desired properties when there are at most 3 alternatives. The tightness of the more restrictive condition on the number of voters was left open, however.
The goal of this paper is to give tight bounds on the number of voters required for Moulin's theorem and related theorems that appear in the literature. To achieve this, we encode these problems as formulas in propositional logic and then use SAT solvers to decide their satisfiability and extract minimal unsatisfiable sets (MUSes) in the case of unsatisfiability. This approach is based on previous work by \citet{TaLi09a}, \citet{GeEn11a}, \citet{BrGe15a}, and \citet{BBGH15a}. However, it turned out that a straightforward application of this methodology is insufficient to deal with the magnitude of the problems we considered. Several novel techniques were necessary to achieve our results. In particular, we extracted knowledge from computer-generated proofs of weaker statements and then used this information to guide the search for proofs of more general statements.

As mentioned above, Moulin's theorem uses the axioms of Condorcet-consistency and participation. Condorcet-consistency goes back to one of the most influential notions in social choice theory, namely that of a \emph{Condorcet winner}. A Condorcet winner is an alternative that is preferred to any other alternative by a majority of voters. The Marquis de Condorcet, after whom this concept is named, argued that, whenever a Condorcet winner exists, it should be elected \citep{Cond85a}. A voting rule satisfying this condition is called \emph{Condorcet-consistent}. Apart from the intuitive appeal of this condition, Condorcet-consistent rules are more robust to changes in the of feasible alternatives and less susceptible to strategic manipulation than other voting rules (such as Borda's rule) \citep[see, \eg][]{CaKe03a,DaMa08a}. 
While the desirability of Condorcet-consistency---as that of any other axiom---has been subject to criticism, many scholars agree that it is very appealing---if not indispensable---and a large part of the social choice literature deals exclusively with Condorcet-consistent voting rules \citep[\eg][]{Fish77a,Lasl97a,BCE+14a}.
\emph{Participation} was first considered by \citet{BrFi83a} and requires that no voter should be worse off by joining an electorate, or---alternatively---that no voter should benefit by abstaining from an election. The desirability of this axiom in any context with voluntary participation is evident. All the more surprisingly, \citeauthor{BrFi83a} have shown that single transferable vote (STV), a common voting rule, violates participation and referred to this phenomenon as the \emph{no-show paradox}. \citet{Moul88b}, perhaps even more startlingly, proved that no Condorcet-consistent voting rule satisfies participation when there are at least 25 voters. 

We leverage SAT solving to obtain an elegant human-readable proof of Moulin's result that requires only 12 voters. While computer-aided solving techniques allow us to tackle difficult combinatorial problems, they usually do not provide additional insight into these problems. Somewhat surprisingly, the computer-aided proofs we found possess a certain kind of symmetry that has not been exploited in previous proofs. 
Moreover, the SAT solver is able to construct a Condorcet-consistent voting rule that satisfies participation as well as a number of other desirable properties for up to 11 voters, proving the optimality of the above bound. This computer-generated voting rule is compatible with the maximin voting rule in 99.8\% of all cases and, in contrast to maximin, only selects alternatives from the top cycle.
As a practical consequence of our theorem, strategic abstention need not be a concern for Condorcet-consistent voting rules when there are at most 4 alternatives and 11 voters, for instance when voting in a committee.
We also use our techniques to provide optimal bounds for related results about set-valued and probabilistic voting rules \citep{JPG09a,Wood97a}. In particular, we give a tight bound of 17 voters for the optimistic preference extension, 14 voters for the pessimistic extension, and 12 voters for the stochastic dominance preference extension. These results are substantial improvements of previous results. For example, the previous statement for the pessimistic extension requires an additional axiom, at least 5 alternatives, and at least 971 voters \citep{JPG09a}.
Our results are summarized in \tabref{tbl:results}.

\begin{table*}[t]
	\colorlet{works}{green!35!black!25!white}
	\colorlet{paradox}{red!35!black!70!white}
	\setlength{\tabcolsep}{5.06pt}
	\begin{tabular}{lccccccccccccccccccccccccc}
		 \multicolumn{1}{r}{$n=$\!\!\!\!} & 1 & 2 & 3 & 4 & 5 & 6 & 7 & 8 & 9 & 10 & 11 & 12 & 13 & 14 & 15 & 16 & 17 & 18 & 19 & 20 & 21 & 22 & 23 & 24 & 25\\ \toprule
		 
		Condorcet & \multicolumn{11}{r|}{\cellcolor{works}\scriptsize Thm \ref{thm:positive-pairwise}\:$\rangle$\!\!\!}
		& \multicolumn{9}{l|}{\cellcolor{paradox}\color{white}\scriptsize \!\!\!\! $\langle$\:Thm \ref{thm:anonymous}} 
		& \multicolumn{4}{l|}{\cellcolor{paradox}\color{white}\scriptsize \!\!\!\!  $\langle$\:\citep{Kard14a}} 
		& \multicolumn{1}{l}{\cellcolor{paradox}\color{white}\scriptsize \!\!\!\!  $\langle$\:\citep{Moul88b}}\\
		
		Maximin & \multicolumn{6}{r}{\cellcolor{works}\scriptsize Thm \ref{thm:maximin}\:$\rangle$\!\!\!} & \multicolumn{19}{l}{\cellcolor{paradox}\color{white}\scriptsize \!\!\!\! $\langle$\:Thm \ref{thm:maximin}} \\
		
		Kemeny & \multicolumn{3}{r}{\cellcolor{works}\scriptsize Thm \ref{thm:kemeny}\:$\rangle$\!\!\!} & \multicolumn{22}{l}{\cellcolor{paradox}\color{white}\scriptsize \!\!\!\! $\langle$\:Thm \ref{thm:kemeny}} \\ \addlinespace[1pt]
		
		
		\midrule
		optimistic &  \multicolumn{16}{r}{\cellcolor{works}\scriptsize Thm \ref{thm:positive-optimist}\:$\rangle$\!\!\!}
		& \multicolumn{9}{l}{\cellcolor{paradox}\color{white}\scriptsize \!\!\!\! $\langle$\:Thm \ref{thm:optimist}} \\
		
		pessimistic & \multicolumn{13}{r}{\cellcolor{works}\scriptsize Thm \ref{thm:pessimist}\:$\rangle$\!\!\!} & \multicolumn{12}{l}{\cellcolor{paradox}\color{white}\scriptsize \!\!\!\! $\langle$\:Thm \ref{thm:pessimist}} \\ \addlinespace[1pt]
		
		
		\midrule
		strong SD &  \multicolumn{11}{r}{\cellcolor{works}\scriptsize Thm \ref{thm:sd-participation}\:$\rangle$\!\!\!}
		& \multicolumn{14}{l}{\cellcolor{paradox}\color{white}\scriptsize \!\!\!\! $\langle$\:Thm \ref{thm:sd-participation}} \\ \addlinespace[1pt]
		\bottomrule \\ \addlinespace[-7pt]
	\end{tabular}
	\setlength{\tabcolsep}{5.1pt}
	\begin{tabular}{lllll}
		\hspace{12cm} &
		\cellcolor{works}{\quad} & Possibility\: &
		\cellcolor{paradox}{\quad} & Impossibility
	\end{tabular}
	\vspace{-12pt}
	\caption{Bounds on the number of voters for which Condorcet extensions can satisfy participation. Green cells indicate the existence of a Condorcet extension satisfying participation (for $m=4$). Red cells indicate that no Condorcet extension satisfies participation (for $m\geq 4$).}
	\label{tbl:results}
\end{table*}

\section{Related Work}

The no-show paradox was first observed by \citet{BrFi83a} for the STV voting rule. \citet{Ray86a} and \citet{LeMe00a} investigate how frequently this phenomenon occurs in practice. The main theorem addressed in this paper is due to \citet{Moul88b} and requires at least 25 voters. This bound was recently brought down to 21 voters by \citet{Kard14a}. Simplified proofs of Moulin's theorem
are given by \citet{Schu03a} and \citet{Smit07a}. 
\citet{Holz88a} and \citet{SaZw09a} strengthen Moulin's theorem by weakening Condorcet-consistency and participation, respectively.
\citet{JPG09a} prove variants of Moulin's theorem for set-valued voting rules based on the optimistic and the pessimistic preference extension. 
\citet{Pere01a} defines a weaker notion of participation in the context of set-valued voting rules and shows that all common Condorcet extensions except the maximin rule and Young's rule violate this property. \citeauthor{Pere01a} notes that ``a practical question, which has not been dealt with here, refers to the number of candidates and voters that are necessary to invoke the studied paradoxes'' \citep[][p.~614]{Pere01a}. 

When assuming that voters have incomplete preferences over sets or lotteries, participation and Condorcet-consistency can be satisfies simultaneously and positive results for common Condorcet-consistent voting rules have been obtained by \citet{Bran11c} and \citet{BBGH15a,BBH15b,BBH15c}.
Abstention in slightly different contexts than the one studied in this paper recently caught the attention of computer scientists working on voting equilibria and campaigning 
\citep{DeEl10a,BFLR12a}.

The computer-aided techniques in this paper are inspired by \citet{TaLi09a}, who reduced well-known impossibility results from social choice theory---such as Arrow's theorem---to finite instances, which can then be checked by a SAT solver. This methodology has been extended and applied to new problems by \citet{GeEn11a}, \citet{BrGe15a}, and \citet{BBGH15a}. 
The results obtained by computer-aided theorem proving have already found attention in the social choice community
\citep{ChSe14a}. More generally, SAT solvers have also proven to be quite effective for other problems in economics. 
A prominent example is the ongoing work by \citet{FNL15a} in which SAT solvers are used for the development and execution of the FCC's upcoming reverse spectrum auction. 
In some respects, our approach also bears some similarities to \emph{automated mechanism design} \citep[see, \eg][]{CoSa02e}, where desirable properties are encoded and mechanisms are computed to fit specific problem instances.

\section{Preliminaries}
\label{sec:preliminaries}

Let $A$ be a set of $m$ alternatives and $\mathcal{N}$ be a set of $n$ voters. 
Whether no-show paradoxes occur depends on the exact values of $m$ and $n$.
By $\electorates :=2^\mathcal{N}\setminus\{\emptyset\}$ we denote the set of \emph{electorates}, \ie non-empty subsets of $\mathcal{N}$. 
For many of our results, we will take $A=\{\a,\b,\c,\d\}$, and we use the labels $\x, \y$ for arbitrary elements of $A$. 

A \emph{(strict) preference relation} is a complete, antisymmetric, and transitive binary relation on~$A$. 
The preference relation of voter~$i$ is denoted by~$\pref_i$. 
The set of all preference relations over $A$ is denoted by~$\mathcal{R}$.
For brevity, we denote by $\a\b\c\d$ the preference relation $\a \mathrel{\pref_i} \b \mathrel{\pref_i} \c \mathrel{\pref_i} \d$, eliding the identity of voter $i$, and similarly for other preferences.

A \emph{preference profile} $R$ is a function from an electorate $N\in\electorates$ to the set of preference relations $\mathcal{R}$. 
The set of all preference profiles is thus given by $\mathcal{R}^\electorates$. 
For the sake of adding and deleting voters, we define for any preference profile $R\in\mathcal{R}^N$ with $(i,\pref_i)\in R$, and $j\in\mathcal{N}\setminus N$, ${\pref_j}\in \mathcal{R}$ 
\begin{align*} &R - i := R\setminus\{(i,\pref_i)\}\text{,}&& R + {(j,\pref_j)} := R\cup\{(j,\pref_j)\}\text{.}
\end{align*}
If the identity of the voter is clear or irrelevant we sometimes, in slight abuse of notation, refer to $R - i$ by $R - {\pref_i}$, and write $R + {\pref_j}$ instead of $R + {(j,\pref_j)}$. 
If $k$ voters with the same preferences $\pref_i$ are to be added or removed, we write $R + k \cdot {\pref_i}$ and $R - k \cdot {\pref_i}$, respectively.

The \emph{majority margin} of 
$R$ is the map $g_R\colon A\times A \rightarrow \mathbb{Z}$ with $g_R(\x,\y) = |\{i\in N\mid \x\mathrel{\pref_i} \y\}| - |\{i\in N\mid \y\mathrel{\pref_i} \x\}|$. 
The majority margin can be viewed as the adjacency matrix of a \emph{weighted tournament} $T_R$. We say that a preference profile $R$ \emph{induces} the weighted tournament $T_R$. 

An alternative $\x$ is called \emph{Condorcet winner} if it wins against any other alternative in a majority contest, \ie if $g_R(\x,\y)>0$ for all $\y\in A\setminus \{\x\}$. 
Conversely, an alternative  $\x$ is a \emph{Condorcet loser} if $g_R(\x,\y)<0$ for all $\y\in A\setminus \{\x\}$. 

Our central object of study are \emph{voting rules}, \ie functions that assigns every preference profile a socially preferred alternative. 
Thus, a voting rule is a function $f\colon\mathcal{R}^\electorates \rightarrow A$. 

In this paper, we study voting rules that satisfy \emph{Condorcet-consistency} and \emph{participation}.

\begin{definition}
	A \emph{Condorcet extension} is a voting rule that selects the Condorcet winner whenever it exists. Thus, $f$ is a Condorcet extension if for every preference profile $R$ that admits a Condorcet winner $\x$, we have $f(R) = \x$.  We say that $f$ is \emph{Condorcet-consistent}.
\end{definition}

\begin{definition}
	A voting rule $f$ satisfies \emph{participation} if all voters always weakly prefer voting to not voting, \ie if $f(R) \pref_i f(R - i)$ for all $R\in\mathcal{R}^N$ and $i\in N$ with $N\in\electorates$. 
\end{definition}

Equivalently, participation requires that for all preference profiles $R$ not including voter $j$, we have $f(R + {\pref_j}) \pref_j f(R)$.

\section{Maximin and Kemeny's Rule}
\label{sec:maximin-kemeny}

The proofs of both positive and negative results to come were obtained through automated techniques that we describe in~\Cref{sec:sat}. To become familiar with the kind of arguments produced in this way, we will now study a more restricted setting which is of independent interest.

Specifically, let us consider voting rules that select winners in accordance with the popular \textit{maximin} and \textit{Kemeny} rules. For a preference profile $R$, an alternative $\x$ is a \textit{maximin winner} if it maximizes $\min_{\y\in A\setminus\{\x\}} g_R(\x,\y)$; thus, $\x$ never gets defeated too badly in pairwise comparisons. An alternative $\x$ is a \textit{Kemeny winner} if it is ranked first in some \textit{Kemeny ranking}. A Kemeny ranking is a preference relation ${\pref_\mathrm{K}}\in\mathcal{R}$ maximizing agreement with voters' individual preferences, \ie it maximizes the quantity $\sum_{i\in N} |{\pref_\mathrm{K}} \cap {\pref_i}|$.

We call a voting rule a \textit{maximin extension} (resp.\ \textit{Kemeny extension}) if it always selects a maximin winner (resp.\ Kemeny winner). Since a Condorcet winner, if it exists, is always the unique maximin and Kemeny winner of a preference profile, any such voting rule is also a Condorcet extension. We can now prove an easy version of Moulin's theorem for these more restricted voting rules.

To this end, we first prove a useful lemma allowing us to extend impossibility proofs for $4$~alternatives to also apply if there are more than $4$~alternatives. Its proof gives a first hint on how Condorcet-consistency and participation interact.

\begin{lemma}
	\label{lem:avoid-last}
	Suppose that $f$ is a Condorcet extension satisfying participation.
	Let $R$ be a preference profile and $B\subsetneq A$ a set of \emph{bad} alternatives such that each voter ranks every $\x\in B$ below every $\y\in A\setminus B$.
	Then $f(R) \notin B$. 
\end{lemma}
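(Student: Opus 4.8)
The plan is to argue by induction on the number of voters $|N|$ in $R$, exploiting the fact that deleting a single voter preserves the hypothesis that $B$ sits entirely below $A\setminus B$ and leaves the electorate nonempty.

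For the base case $|N|=1$, the unique voter's top-ranked alternative is a Condorcet winner, since it beats every other alternative $1$ to $0$. As that voter ranks every alternative of $A\setminus B$ above every alternative of $B$ and $B\subsetneq A$, this top alternative lies in $A\setminus B$. Condorcet-consistency then forces $f(R)$ to equal it, so $f(R)\notin B$.

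For the inductive step, I would assume the statement for every profile with fewer voters (and every admissible set of bad alternatives), and suppose toward a contradiction that $|N|\ge 2$ and $f(R)=\x$ for some $\x\in B$. Fix any voter $i\in N$ and pass to $R-i$. This profile has a nonempty electorate, and every remaining voter still ranks all of $B$ below all of $A\setminus B$; hence the induction hypothesis applies and gives $f(R-i)\in A\setminus B$. Now participation yields $\x = f(R) \pref_i f(R-i)$. On the other hand, since $\x\in B$ is bad while $f(R-i)\in A\setminus B$ is good, voter $i$'s block ordering gives $f(R-i)\pref_i \x$. By antisymmetry of $\pref_i$ these two relations force $\x = f(R-i)$, which is impossible as one lies in $B$ and the other in $A\setminus B$. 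This contradiction completes the induction.

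The only genuine insight, and the step I expect to be the crux, is noticing that removing one voter leaves the hypotheses of the lemma intact, so that the induction hypothesis can be played against participation. Once this is seen the argument is immediate: participation says the full winner is weakly preferred \emph{by the departing voter} to the reduced winner, yet the reduced winner is provably good while the alleged full winner is bad, and every voter unanimously places good above bad. Everything else---the base case and the appeal to antisymmetry that turns the two weak comparisons into an outright contradiction---is routine.
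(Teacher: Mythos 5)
Your proof is correct and follows essentially the same route as the paper's: induction on $|N|$, with the base case resolved by Condorcet-consistency and the inductive step by playing participation for the departing voter against the induction hypothesis. The only cosmetic difference is that you derive the contradiction via antisymmetry of $\pref_i$, whereas the paper observes directly that $f(R)\pref_i f(R-i)$ forces $f(R-i)$ to be bad whenever $f(R)$ is; both are valid.
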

\begin{proof}
	By induction on the number of voters $|N|$ in $R$. If $R$ consists of a single voter $i$, then, since $f$ is a Condorcet extension, $f(R)$ must return $i$'s top choice, which is not bad. If $R$ consists of at least 2 voters, and $i\in N$, then by participation $f(R) \pref_i f(R - i)$. If $f(R)$ were bad, then so would be $f(R-i)$, contradicting the inductive hypothesis.
\end{proof}

The following computer-aided proofs, just like the more complicated proofs to follow, can be understood solely by carefully examining the corresponding `proof diagram'. An arrow such as
\tikz[every node/.style={inner sep=2pt},
lbl/.style={fill=white, inner sep=1pt},
->,baseline=(a.base)]
{\draw node (a) {$R$} 
	node[right=1.75cm of a] (b) {$R'$} 
	(a) edge node[lbl,xshift=-0.1cm] {${}+ \yes{\a\b}\no{\c\d}$} (b);}
indicates that profile $R'$ is obtained from $R$ by adding a voter $\a\b\c\d$, and is read as ``if one of the bold green alternatives (here $\yes{\a\b}$) is selected at $R$, then one of them is selected at $R'$'' (by participation). A circled node 
\tikz[condorcet/.style={circle,draw, inner sep=1.5pt},baseline=(a.base)]
{\draw node[condorcet] (a) {$\a$};}
indicates a profile admitting Condorcet winner $\a$, although in the proofs of Theorems~\ref{thm:maximin} and~\ref{thm:kemeny}, we use it to refer to maximin and Kemeny winners, respectively.

\begin{theorem}
	\label{thm:maximin}
	There is no maximin extension that satisfies participation for $m\geq 4$ and $n\geq 7$. (For $m=4$ and $n\leq 6$, such a maximin extension exists.)
\end{theorem}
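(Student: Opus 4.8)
The plan is to treat the two directions separately, and to reduce both to the case $m=4$ with $A=\{\a,\b,\c,\d\}$. For the impossibility with $m\geq 4$, I would first use \Cref{lem:avoid-last} to dispose of the surplus alternatives: starting from any profile on four ``active'' alternatives, rank all remaining alternatives below $\{\a,\b,\c,\d\}$ in every vote. Each active alternative then beats every inactive one unanimously, so that for an active $\x$ the quantity $\min_{\y\neq\x} g_R(\x,\y)$ is attained within the active four and equals its score in the four-alternative subprofile, while every inactive alternative has score $-n$ and can never be a maximin winner. Hence the maximin winners are exactly those of the subprofile, and \Cref{lem:avoid-last} additionally guarantees that $f$ never selects an inactive (bad) alternative. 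It therefore suffices to derive a contradiction for $m=4$.

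For the core four-alternative impossibility I would build a short \emph{proof diagram}: a chain of seven-voter profiles $R_1,R_2,\dots$ over $\{\a,\b,\c,\d\}$ linked by single-voter additions. One starts at a profile $R_1$ possessing a unique maximin winner (for instance a Condorcet winner, a circled node), which pins $f(R_1)$. Along an arrow $R\to R+{\pref_j}$, participation forces $f(R+{\pref_j})\pref_j f(R)$, so if $f(R)$ is already pinned to some $\x$, the admissible winners at $R+{\pref_j}$ form the upper segment $\{\z : \z\pref_j \x\}$ of $\x$ in $j$'s order; intersecting this segment with the maximin-winner set of $R+{\pref_j}$ (read off from the margins $g$) should again pin a single alternative. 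Iterating, I would engineer the margins so that after a few additions one reaches a profile at which every maximin winner lies strictly below the forced admissible alternative in the relevant voter's ranking, so that maximin-consistency and the participation constraint become disjoint, a contradiction. The construction can be made cyclically symmetric in three of the alternatives, which both shortens the search and explains why the same pinning argument recurs. To obtain the statement for every $n\geq 7$ I would pad: inserting a pair of mutually reversed voters (say $\a\b\c\d$ and $\d\c\b\a$) leaves every majority margin $g_R$, hence every maximin-winner set and every participation arrow, unchanged, so the entire diagram survives with two extra voters present throughout; repeating this from a seven-voter and an eight-voter base diagram then covers both parities.

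For the possibility I would exploit finiteness. For $m=4$ there are only finitely many profiles with at most six voters, and the assertion ``there exists a maximin extension on them satisfying participation'' is a finite constraint-satisfaction problem. Introducing a Boolean variable for each statement $f(R)=\x$, together with clauses enforcing that exactly one winner is chosen at each $R$, that the chosen winner maximizes $\min_{\y\neq\x} g_R(\x,\y)$, and that $f(R)\pref_i f(R-i)$ for every $i\in N$ and every reduction $R-i$, a SAT solver can return a satisfying assignment, that is, an explicit maximin extension respecting participation for all $n\leq 6$.

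The main obstacle is the impossibility direction, and specifically the discovery of the diagram rather than its verification: the margin and maximin computations along a fixed chain are entirely mechanical, but finding a chain using only seven voters that closes into a contradiction, and simultaneously knowing that no such phenomenon can be forced at $n=6$, is exactly the combinatorial search that the SAT encoding is designed to perform. Guessing the right symmetric seven-voter construction by hand is the delicate part.
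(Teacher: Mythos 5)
Your overall architecture matches the paper's: reduce $m>4$ to $m=4$ by appending universally bottom-ranked alternatives, prove the core case via a ``proof diagram'' of participation arrows terminating in profiles with a unique maximin winner, and settle the $n\le 6$ possibility by exhaustive computer search. Your treatment of the reduction step is, if anything, slightly more careful than the paper's, since you note explicitly that the padded alternatives have maximin score $-n$ and that the maximin winners of the padded profile coincide with those of the four-alternative subprofile (the paper only invokes \Cref{lem:avoid-last} and leaves this implicit).

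The impossibility direction nevertheless contains a genuine gap: you never exhibit the profiles on which the argument runs, and you say yourself that guessing the right seven-voter construction by hand is the delicate part. That construction \emph{is} the content of the theorem; a description of the shape a proof would take, with the combinatorial core deferred to an unspecified search, is not a proof. For the record, the paper's certificate is far shorter than the chain you envisage: a single $6$-voter profile $R$ consisting of $\a\b\d\c$, $2\cdot\b\d\c\a$, $2\cdot\c\a\b\d$, $\d\c\a\b$ suffices. If $f(R)\in\{\a,\b\}$, adding one $\a\b\c\d$ voter yields a $7$-voter profile whose unique maximin winner is $\c$, which that voter ranks below both $\a$ and $\b$ --- contradicting participation; the case $f(R)\in\{\c,\d\}$ is the mirror image under the automorphism $\a\b\c\d\mapsto\d\c\b\a$ (an order-reversing involution, not the cyclic symmetry you anticipated), with the added voter $\d\c\b\a$ forcing maximin winner $\b$. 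Note also that no case analysis at a ``pinned'' root and no forward propagation along a long chain is needed, and that your parity/padding step is superfluous in the paper's framework: a rule for $n$ voters is defined on all non-empty sub-electorates of $\mathcal N$, so an instance of the paradox using at most $7$ voters immediately rules out every $n\ge 7$.
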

\begin{proof}
	Let $f$ be a maximin extension which satisfies participation. Consider the following 6-voter profile $R$:
	\[\begin{tikzpicture}
	[->,
	level distance=17mm,
	sibling distance=55mm,
	level 2/.style={sibling distance=20mm},
	ne/.style={inner sep=2pt},
	empty/.style={circle,draw=black!75,fill=black!40, inner sep=1.5pt},
	condorcet/.style={draw, circle, inner sep=1.5pt},
	lbl1/.style={sloped, above},
	lbl/.style={fill=white, inner sep=1pt}
	]
	
	\draw node	[inner sep=5pt] 
	(fakeroot) at (0, 0) 
	{
		\begin{array}[t]{c@{\quad}c@{\quad}c@{\quad}c}
		\toprule
		1&2&2&1\\
		\midrule
		\a&\b&\c&\d\\
		\b&\d&\a&\c\\
		\d&\c&\b&\a\\
		\c&\a&\d&\b\\
		\bottomrule\\[-5pt]
		\multicolumn{4}{c}{R}
		\end{array}
	};
	\draw node [rectangle, minimum width=3cm, minimum height=1.25cm] 
	(root) at (0, 0.75cm) 
	{}
	child {
		node[condorcet] (alpha) {$\c$}
		edge from parent node[lbl,xshift=-0.15cm] {${}+ \yes{\a\b}\no{\c\d}$}
	}
	child {
		node[condorcet] (alphaprime) {$\b$}
		edge from parent node[lbl] {${}+ \yes{\d\c}\no{\b\a}$}
	}
	;
	\ifshowtournaments
	\begin{scope}[shift={(alpha.south)},shift={(-0.65cm,-0.5cm)}, scale=1.3,vertex/.style={fill=white,circle,draw,minimum size=13.9,inner sep=2}]
	\draw
	(0,0) node[vertex] (a) {$\a$}
	(1,0) node[vertex] (b) {$\b$} 
	(1,-1) node[vertex, very thick] (c) {$\c$} 
	(0,-1) node[vertex] (d) {$\d$};
	
	\draw [-latex] (b) to node [white,right] {1} (c);
	\draw [-latex] (c) to node [above,sloped,near start] {3} (a);
	\draw [-latex] (d) to node [white,above] {1} (c);
	\draw [-latex] (a) to node [above] {3} (b);
	\draw [-latex] (b) to node [sloped,above,near end] {5} (d);
	\draw [-latex] (a) to node [white, left] {1} (d);
	\end{scope}
	
	\begin{scope}[shift={(alphaprime.south)},shift={(-0.65cm,-0.5cm)}, scale=1.3,vertex/.style={fill=white,circle,draw,minimum size=13.9,inner sep=2}]
	\draw
	(0,0) node[vertex] (a) {$\a$}
	(1,0) node[vertex, very thick] (b) {$\b$} 
	(1,-1) node[vertex] (c) {$\c$} 
	(0,-1) node[vertex] (d) {$\d$};
	
	\draw [-latex] (c) to node [white,above] {1} (b);
	\draw [-latex] (c) to node [above,sloped,near start] {5} (a);
	\draw [-latex] (d) to node [below] {3} (c);
	\draw [-latex] (a) to node [white,above] {1} (b);
	\draw [-latex] (b) to node [sloped,above,near end] {3} (d);
	\draw [-latex] (d) to node [white, left] {1} (a);
	\end{scope}
	\fi
	\end{tikzpicture}
	\vspace{-10pt}
	\]
	
	Suppose $f(R)\in \{\a,\b\}$. 
	Adding an $\a\b\c\d$ vote leads to a weighted tournament in which alternative $\c$ is the unique maximin winner. 
	But this contradicts participation since the added voter would benefit from abstaining the election. 
	
	Symmetrically, if $f(R)\in \{\c,\d\}$, then adding a $\d\c\b\a$ vote leads to a weighted tournament in which $\b$ is the maximin winner, again contradicting participation. 
	The symmetry of the argument is due to an automorphism of $R$, namely the relabelling of alternatives according to $\a\b\c\d\mapsto\d\c\b\a$.
	
	If $m>4$, we add new bad alternatives $\x_1, \x_2, \dots, \x_{m-4}$ to the bottom of $R$ and of the additional voters. By \Cref{lem:avoid-last}, $f$ chooses from $\{\a,\b,\c,\d\}$ at each step, completing the proof. 
	
	The existence result for $n \leq 6$ is established by the methods described in \Cref{sec:sat}.
\end{proof}

For 3~alternatives, \citet{Moul88b} proved that the voting rule that chooses the lexicographically first maximin winner satisfies participation. \Cref{thm:maximin} shows that this is not the case for $4$~alternatives, even if there are only $7$~voters and no matter how we pick among maximin winners.

\begin{theorem}
	\label{thm:kemeny}
	There is no Kemeny extension that satisfies participation for $m\geq 4$ and $n\geq 4$. (For $m=4$ and $n\leq 3$, such a Kemeny extension exists.)
\end{theorem}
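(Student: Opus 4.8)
The plan is to follow the template of the proof of Theorem~\ref{thm:maximin}, scaled down so that a single voter addition takes us from $n=3$ to the critical value $n=4$. First I would fix a Kemeny extension $f$ satisfying participation and exhibit a concrete three-voter profile $R$ over $A=\{\a,\b,\c,\d\}$. I would then compute its set of Kemeny winners and case-split on $f(R)$, which must lie in this set. In each case I would add one voter $\pref_j$ whose highest-ranked alternatives are exactly the candidates still in play; participation, i.e.\ $f(R+{\pref_j}) \pref_j f(R)$, then forces $f(R+{\pref_j})$ to remain among those top alternatives. A direct computation of the four-voter profile $R+{\pref_j}$ would show that it has a \emph{unique} Kemeny winner which voter $j$ ranks below all of them, contradicting participation. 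As in the figure accompanying Theorem~\ref{thm:maximin}, I would present this as a proof diagram whose circled nodes now mark the unique Kemeny winners of the child profiles.

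Unlike the maximin case, I do not expect a clean order-two automorphism to be available to halve the casework: a profile with an odd number of voters admits no preference fixed by an involution, so no three-voter profile can be invariant under the reversal $\a\b\c\d\mapsto\d\c\b\a$ that drove the symmetry in Theorem~\ref{thm:maximin}. The only symmetry a three-voter profile can carry is an order-three relabelling cycling three alternatives and fixing the fourth, as in a Condorcet cycle with $\d$ ranked last by everyone; since this may not by itself localise the contradiction, I anticipate writing out the cases of the split explicitly. To extend the bound to $m>4$, I would append fresh bad alternatives $\x_1,\dots,\x_{m-4}$ at the bottom of $R$ and of every added voter and invoke Lemma~\ref{lem:avoid-last} to confine $f$ to $\{\a,\b,\c,\d\}$ at each step. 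The matching existence statement for $n\leq 3$ I would take from the SAT-based search of Section~\ref{sec:sat}.

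The main obstacle is two-fold, and both parts stem from the Kemeny winner depending on the whole optimal ranking rather than on pairwise minima. First, certifying that each four-voter child has a unique Kemeny winner means comparing the agreement scores of all $4!=24$ linear orders, or arguing that the non-zero majority margins pin down the optimal order uniquely; with four voters the margins are even, so ties (margin zero) are common, and a careless choice of added voter produces a tie whose winner set still contains the very alternative one is trying to exclude, which defeats the contradiction. Second, locating the initial profile $R$ is genuinely constrained: three-voter majority tournaments realise only a restricted family of weighted tournaments, and one must keep each case of the split non-vacuous while ensuring that the child's unique winner is forced by Kemeny scoring rather than by being a Condorcet winner---a Condorcet winner $\c$ in the child would make all three voters of $R$ rank $\c$ above $\a$ and $\b$, so $\c$ would dominate them in $R$ and the case $f(R)\in\{\a,\b\}$ would already be vacuous. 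Threading this needle is precisely the kind of search the encoding of Section~\ref{sec:sat} is built for, and I would rely on it to find the profile before writing up the resulting human-readable diagram.
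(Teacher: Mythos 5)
Your proposal inverts the structure of the paper's argument, and the inversion is where it breaks. The paper does not grow a 3-voter root into 4-voter children; it starts from the single 4-voter profile $R=\{\a\d\c\b,\;\b\a\d\c,\;\c\b\a\d,\;\d\c\b\a\}$ and \emph{removes} one voter per case. That direction is essential. If $f(R)=\x$ and you delete the voter $i$ who ranks $\x$ \emph{last}, participation ($f(R)\pref_i f(R-i)$) pins down $f(R-i)=\x$ exactly, so it suffices that $\x$ fail to be a Kemeny winner of a 3-voter subprofile; and indeed $R-\c\b\a\d$ has $\a\d\c\b$ as its unique Kemeny ranking, ruling out $f(R)=\d$. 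Moreover the order-4 relabelling $\a\mapsto\b\mapsto\c\mapsto\d\mapsto\a$ fixes $R$ and is transitive on the alternatives, so this single computation disposes of all four cases. Your symmetry analysis is therefore aimed at the wrong group: the relevant symmetry is not the order-2 reversal (which indeed no odd-size profile can carry) but a transitive 4-cycle, which only a 4-voter root can carry.

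In your direction the key step is much stronger than you acknowledge, and for the natural candidate root it is provably unachievable. In the case $f(R)=\x$ you add a voter $j$ ranking the live set $S\ni\x$ on top; participation only forces $f(R+{\pref_j})$ into $j$'s upper contour set of $\x$, which contains all of $S$. So you need the 4-voter child's \emph{entire} Kemeny-winner set to avoid $S$, even though the appended ballot reinforces exactly the alternatives of $S$, each of which must already be a Kemeny winner of $R$ for its case to be live. Take the root you implicitly suggest, the Condorcet cycle $\{\a\b\c\d,\;\b\c\a\d,\;\c\a\b\d\}$ with Kemeny winners $\{\a,\b,\c\}$: adding \emph{any} ballot with $\a$ first leaves $\a$ a Kemeny winner of the child ($\d$ stays a Condorcet loser, and on $\{\a,\b,\c\}$ the new margins $g(\a,\b)=2$, $g(\a,\c)\geq 0$ keep an $\a$-first ordering optimal), so the case $f(R)=\a$ can never be refuted in one step; adding a ballot with $\a$ second only enlarges the set you must exclude. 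Beyond this structural problem, the proposal exhibits no profile and performs no Kemeny computation, explicitly deferring the search to the SAT machinery --- but the existence of a working profile and the verification of its children is the entire content of the theorem, so what you have is a research plan whose one concrete design choice points away from the argument that works.
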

\begin{proof}
	Let $f$ be a Kemeny extension which satisfies participation. Consider the following 4-voter profile $R$:
	\[\begin{tikzpicture}
	[->,
	level distance=19mm,
	sibling distance=21mm,
	level 2/.style={sibling distance=20mm},
	ne/.style={inner sep=2pt},
	empty/.style={circle,draw=black!75,fill=black!40, inner sep=1.5pt},
	condorcet/.style={circle,draw, inner sep=1.5pt},
	lbl/.style={fill=white, inner sep=1pt}
	]
	
	\draw node	[inner sep=5pt] 
	(fakeroot) at (0, 0) 
	{
		\begin{array}[t]{c@{\quad}c@{\quad}c@{\quad}c}
		\toprule
		1&1&1&1\\
		\midrule
		\a&\b&\c&\d\\
		\d&\a&\b&\c\\
		\c&\d&\a&\b\\
		\b&\c&\d&\a\\
		\bottomrule\\[-5pt]
		\multicolumn{4}{c}{R}
		\end{array}
	};
	\draw node [rectangle, minimum width=2.5cm, minimum height=1.4cm] 
	(root) at (0, 0.7cm) 
	{}
	\ifshowtournaments
	child {
		node[condorcet] (1) {$\a$}
		edge from parent node[lbl] {${}- \no{\c\b\a}\yes{\d}$}
	}
	child {
		node[condorcet] (2) {$\b$}
		edge from parent node[lbl] {${}- \no{\d\c\b}\yes{\a}$}
	}
	child {
		node[condorcet] (3) {$\c$}
		edge from parent node[lbl] {${}- \no{\a\d\c}\yes{\b}$}
	}
	child {
		node[condorcet] (4) {$\d$}
		edge from parent node[lbl] {${}- \no{\b\a\d}\yes{\c}$}
	}
	;
	\begin{scope}[shift={(1.south)},shift={(-0.65cm,-0.5cm)}, scale=1.2,vertex/.style={fill=white,circle,draw,minimum size=13.9,inner sep=2}]
	\draw
	(0,0) node[vertex, very thick] (a) {$\a$}
	(1,0) node[vertex] (b) {$\b$} 
	(1,-1) node[vertex] (c) {$\c$} 
	(0,-1) node[vertex] (d) {$\d$};
	
	\draw [-latex] (c) to node [white,right] {1} (b);
	\draw [-latex] (d) to node [above] {3} (c);
	\draw [-latex] (b) to node [white,above] {1} (a);
	\draw [-latex] (a) to node [white, left] {1} (d);

	\draw [-latex] (a) to node [white,above,sloped,near start] {1} (c);
	\draw [-latex] (d) to node [white,sloped,above,near end] {1} (b);
	\end{scope}
	\begin{scope}[shift={(2.south)},shift={(-0.65cm,-0.5cm)}, scale=1.2,vertex/.style={fill=white,circle,draw,minimum size=13.9,inner sep=2}]
	\draw
	(0,0) node[vertex] (a) {$\a$}
	(1,0) node[vertex, very thick] (b) {$\b$} 
	(1,-1) node[vertex] (c) {$\c$} 
	(0,-1) node[vertex] (d) {$\d$};
	
	\draw [-latex] (c) to node [white,right] {1} (b);
	\draw [-latex] (d) to node [white,above] {1} (c);
	\draw [-latex] (b) to node [white,above] {1} (a);
	\draw [-latex] (a) to node [left] {3} (d);

	\draw [-latex] (a) to node [white,above,sloped,near start] {1} (c);
	\draw [-latex] (b) to node [white,sloped,above,near end] {1} (d);
	\end{scope}
	\begin{scope}[shift={(3.south)},shift={(-0.65cm,-0.5cm)}, scale=1.2,vertex/.style={fill=white,circle,draw,minimum size=13.9,inner sep=2}]
	\draw
	(0,0) node[vertex] (a) {$\a$}
	(1,0) node[vertex] (b) {$\b$} 
	(1,-1) node[vertex, very thick] (c) {$\c$} 
	(0,-1) node[vertex] (d) {$\d$};
	
	\draw [-latex] (c) to node [white,right] {1} (b);
	\draw [-latex] (d) to node [white,above] {1} (c);
	\draw [-latex] (b) to node [above] {3} (a);
	\draw [-latex] (a) to node [white, left] {1} (d);

	\draw [-latex] (c) to node [white,above,sloped,near start] {1} (a);
	\draw [-latex] (b) to node [white,sloped,above,near end] {1} (d);
	\end{scope}
	\begin{scope}[shift={(4.south)},shift={(-0.65cm,-0.5cm)}, scale=1.2,vertex/.style={fill=white,circle,draw,minimum size=13.9,inner sep=2}]
	\draw
	(0,0) node[vertex] (a) {$\a$}
	(1,0) node[vertex] (b) {$\b$} 
	(1,-1) node[vertex] (c) {$\c$} 
	(0,-1) node[vertex, very thick] (d) {$\d$};
	
	\draw [-latex] (c) to node [right] {3} (b);
	\draw [-latex] (d) to node [white,above] {1} (c);
	\draw [-latex] (b) to node [white,above] {1} (a);
	\draw [-latex] (a) to node [white, left] {1} (d);

	\draw [-latex] (c) to node [white,above,sloped,near start] {1} (a);
	\draw [-latex] (d) to node [white,sloped,above,near end] {1} (b);
	\end{scope}
	\else
	child {
		node[condorcet] (1) at (-0.4cm, 1cm) {$\a$}
		edge from parent node[lbl] {${}- \no{\c\b\a}\yes{\d}$}
	}
	child {
		node[condorcet] (2) at (-1.5cm, 0.3cm) {$\b$}
		edge from parent node[lbl] {${}- \no{\d\c\b}\yes{\a}$}
	}
	child {
		node[condorcet] (3) at (1.5cm, 0.3cm) {$\c$}
		edge from parent node[lbl] {${}- \no{\a\d\c}\yes{\b}$}
	}
	child {
		node[condorcet] (4) at (0.4cm, 1cm) {$\d$}
		edge from parent node[lbl] {${}- \no{\b\a\d}\yes{\c}$}
	}
	;
	\fi
	\end{tikzpicture}
	\vspace{-10pt}
	\]
	
	Suppose $f(R)=\d$. 
	Then removing $\c\b\a\d$ from $R$ yields a weighted tournament in which $\a$ is the (unique) Kemeny winner, which contradicts participation. 
	Analogously, we can exclude the other three possible outcomes for $R$ by letting a voter abstain, which always leads to a unique Kemeny winner and a contradiction with participation. The arguments are identical because $R$ is completely symmetric in the sense that for any pair of alternatives~$\x$ and $\y$, there is an automorphism of $R$ that maps $\x$ to $\y$.
	
	Just like for \Cref{thm:maximin}, if $m>4$, we add new bad alternatives $\x_1, \x_2, \dots, \x_{m-4}$ to the bottom of $R$ and of the additional voters. By \Cref{lem:avoid-last}, $f$ chooses from $\{\a,\b,\c,\d\}$ at each step, completing the proof.
\end{proof}

One remarkable and unexpected aspect of the computer-aided proofs above is that their simplicity is due to automorphisms of the underlying preference profiles. Similar automorphisms will also be used in the proofs of the stronger theorems in Sections \ref{sec:results}, \ref{sec:set-valued}, and \ref{sec:probabilistic}. We emphasize that these symmetries are not hard-coded into our problem specification and, to the best of our knowledge, have not been exploited in previous proofs of similar statements.

\section{Method: SAT Solving for \\ Computer-Aided Proofs}\label{sec:sat}

The bounds in this paper were obtained with the aid of a computer. In this section, we describe the method that we employed. The main tool in our approach is an encoding of our problems into propositional logic. We then use SAT solvers to decide whether (in a chosen setting) there exists a Condorcet extension satisfying participation. If the answer is yes, the solver returns an explicit such voting rule. If the answer is no, we use a process called \textit{MUS extraction} to find a short certificate of this fact which can be translated into a human-readable proof. By successively proving stronger theorems and using the insights obtained through MUS extraction, we arrived at results as presented in their full generality in this paper. 

\subsection{SAT Encoding}
\label{subsec:sat-encoding}

``For $n$~voters and $4$~alternatives, is there a voting rule~$f$ that satisfies Condorcet-consistency and participation?'' 

A natural encoding of this question into propositional logic proceeds like this: Generate all profiles over 4 alternatives with at most $n$ voters. For each such profile $R$, introduce 4 propositional variables $x_{R,\a}$, $x_{R,\b}$, $x_{R,\c}$, $x_{R,\d}$, where the intended meaning of $x_{R,\a}$ is
\[ x_{R,\a} \text{ is set true} \iff f(R) = \a. \]
We then add clauses requiring that for each profile $R$, $f(R)$ takes \textit{exactly one} value, and we add clauses requiring $f$ to be Condorcet-consistent and satisfy participation.

Sadly, the encoding sketched above is not tractable for the values of $n$ that we are interested in: the number of variables and clauses used grows as $\Theta(24^n)$, because there are $4! = 24$ possible preference relations over 4 alternatives and thus $24^n$ profiles with $n$ voters. 
For $n=7$, this leads to more than 400 billion variables, and for $n=15$ we exceed $10^{22}$ variables.

To escape this combinatorial explosion, we will temporarily restrict our attention to \textit{pairwise} voting rules. This means that we assign an outcome alternative $f(T)$ to every weighted tournament $T$. We then define a voting rule that assigns  the outcome $f(T_R)$ to each preference profile $R$, where $T_R$ is the weighted tournament induced by $R$.

The number of tournaments induced by profiles with $n$ voters grows much slower than the number of profiles---our computer enumeration
suggests a growth of order about $1.5^n$. This much more manageable (yet still exponential) growth allows us to consider problem instances up to $n \approx 16$ which turns out to be \emph{just} enough.

Other than referring to (weighted) tournaments instead of profiles, our encoding into logic now proceeds exactly like before. For each tournament $T$, we introduce the variables $x_{T,\a}$, $x_{T,\b}$, $x_{T,\c}$, $x_{T,\d}$ and define the formulas
{\setlength{\belowdisplayskip}{2pt}
\begin{align*}
\formulaname{non-empty}_T &:= x_{T,\a} \lor x_{T,\b} \lor x_{T,\c} \lor x_{T,\d} \\
\formulaname{mutex}_T &:= \bigland_{\x\neq\y}  (\lnot x_{T,\x} \lor \lnot x_{T,\y})
\end{align*}}
With our intended interpretation of the variables $x_{T,\x}$, all models of 
$\bigland_T \formulaname{non-empty}_T \land \formulaname{mutex}_T$
are functions from tournaments into $\{\a,\b,\c,\d\}$. (The word $\formulaname{mutex}$ abbreviates `mutual exclusion' and corresponds to the voting rule selecting a unique winner.)

Since we are interested in voting rules that satisfy participation, we also need to encode this property. To this end, let $T = T_R$ be a tournament induced by $R$ and let $\pref$ be a preference relation. Define $T + {\pref} := T_{R+{\pref}}$. (The tournament $T+{\pref}$ is independent of the choice of $R$.) We define
\[ 
\formulaname{participation}_{T,{\pref}} := \bigland_\x \bigg( x_{T,\x} \to \biglor_{\y \pref \x} x_{T + {\pref}, \y} \bigg). \]
Requiring $f$ to be Condorcet-consistent is straightforward: if tournament $T$ admits $\b$ as the Condorcet winner, we add
\[ \formulaname{condorcet}_T := \lnot x_{T,\a} \land x_{T,\b} \land \lnot x_{T,\c} \land \lnot x_{T,\d}, \]
and we add similar formulas for each tournament that admits a Condorcet winner.
Then the models of the conjunction of all the $\formulaname{non-empty}$, $\formulaname{mutex}$, $\formulaname{participation}$, and $\formulaname{condorcet}$ formulas are precisely the pairwise voting rules satisfying Condorcet-consistency and participation.

By adapting the $\formulaname{condorcet}$ formulas, we can impose more stringent conditions on $f$---this is how our results for maximin and Kemeny extensions are obtained. We can also use this to exclude Pareto-dominated alternatives, and to require $f$ to always pick from the top cycle.

For some purposes it will be useful not to include the $\formulaname{mutex}$ clauses in our final formula. Models of this formula then correspond to
\textit{set-valued}
voting rules that satisfy participation interpreted according to the optimistic preference extension. See \Cref{sec:set-valued} for results in this setting.

\subsection{SAT Solving and MUS Extraction}
The formulas we have obtained above are all given in \textit{conjunctive normal form} (CNF), and thus can be evaluated without further transformations by any off-the-shelf SAT solver.
In order to physically produce a CNF formula as described, we employ a straightforward Python script that performs a breadth-first search to discover all weighted tournaments with up to $n$ voters (see \Cref{alg:write_formula} for a schematic overview of the program). The script outputs a CNF formula in the standard DIMACS format,
and also outputs a file that, for each variable $x_{T,\x}$, records the tournament $T$ and alternative $\x$ it denotes. This is necessary because the DIMACS format uses uninformative variable descriptors (consecutive integers) and memorizing variable meanings allows us to interpret the output of the SAT solver.

\begin{algorithm}
	\caption{Generate formula for up to $n$ voters}
	\begin{algorithmic}
		\State $T_0 \gets \{$weighted tournament on $\{\a,\b,\c,\d\}$ with
		\State \qquad \quad all edges having weight 0$\}$.
		\For{$k = 1,\dots,n$}
			\State $T_k \gets \emptyset$
			\For{$T \in T_{k-1}$}
				\For{${\pref} \in \mathcal R$ }
					\State Calculate $T' := T + {\pref}$
					\State Add $T'$ to $T_k$
					\State Write $\formulaname{non-empty}_{T'}, \formulaname{mutex}_{T'}, \formulaname{condorcet}_{T'}$
					\State Write $\formulaname{participation}_{T,{\pref}}$
				\EndFor
			\EndFor
		\EndFor
	\end{algorithmic}
	\label{alg:write_formula}
\end{algorithm}

As an example, the output formula for $n = 15$ in DIMACS format has a size of about 7~GB and uses 
50 million variables and 
2 billion clauses, taking 6.5 hours to write. Plingeling~\citep{Bier13a}, a popular SAT solver that we used for all results in this paper, solves this formula in 50 minutes of wall clock time, half of which is spent parsing the formula.

In case a given instance is satisfiable, the solver returns a satisfying assignment, giving us an existence proof and a concrete example for a voting rule satisfying participation (and any further requirements imposed). In case a given instance in unsatisfiable, we would like to have short certificate of this fact as well. One possibility for this is having the SAT solver output a resolution proof (in DRUP format, say). 
This yields a machine-checkable proof, but has two major drawbacks: the generated proofs can be uncomfortably large \citep{KoLi14a},
and they do not give human-readable insights about \textit{why} the formula is unsatisfiable.

We handle this problem by computing a \textit{minimal unsatisfiable subset (MUS)} of the unsatisfiable CNF formula. An MUS is a subset of the clauses of the original formula which itself is unsatisfiable, and is minimally so: removing any clause from it yields a satisfiable formula. We used the tools MUSer2 \citep{BeMa12a} and MARCO \citep{LPMM15a} to extract MUSes. 
If an unsatisfiable formula admits a very small MUS, it is often possible to obtain a human-readable proof of unsatisfiability from it \citep{BrGe15a,BBGH15a}. 

Note that for purposes of extracting human-readable proofs, it is desirable for the MUS to be as small as possible, and also to refer to as few different tournaments as possible. The first issue can be addressed by running the MUS extractor repeatedly, instructing it to order clauses randomly (note that clause sets of different cardinalities can be minimally unsatisfiable with respect to set inclusion); similarly, we can use MARCO to enumerate all MUSes and look for small ones.
The second issue can be addressed by computing a \textit{group MUS}: here, we partition the clauses of the CNF formula into \textit{groups}, and we are looking for a minimal set of groups that together are unsatisfiable. In our case, the clauses referring to a given tournament $T$ form a group. In practice, finding a group MUS first and then finding a standard (clause-level) MUS within the group MUS yielded sets of size much smaller than MUSes returned without the intermediate group-step (often by a factor of 10).

To translate an MUS into a human-readable proof, we created another program that visualized the MUS in a convenient form.\footnote{Roughly, the visualization program proceeds by drawing an edge for every $\formulaname{participation}_{T,{\pref}}$ clause that occurs in the MUS, and marks the nodes for which $\formulaname{condorcet}_{T}$ clauses appear in the MUS.} 
Indeed, this program outputs the `proof diagrams' like \Cref{fig:tree12} that appear throughout this paper (though we re-typeset them). We think that interpreting these diagrams is quite natural (and is perhaps even easier than reading a textual translation). More importantly, the automatically produced graphs allowed us to quickly judge the quality of an extracted MUS.

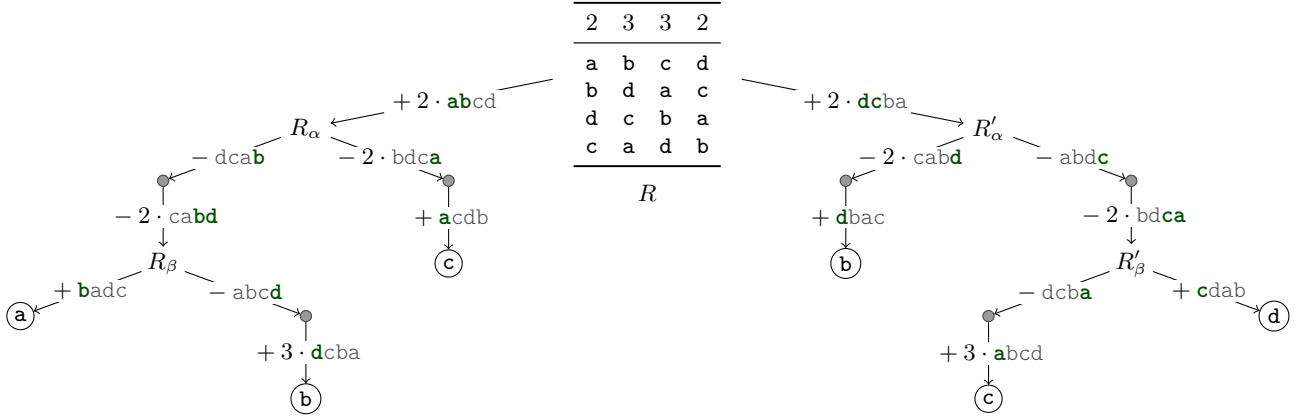
\begin{figure*}[t]
	\centering
	\begin{tikzpicture}
	[->,
	level distance=9mm,
	sibling distance=55mm,
	level 2/.style={level distance=7mm,sibling distance=23mm},
	level 3/.style={level distance=11mm},
	level 4/.style={level distance=7mm},
	level 5/.style={level distance=11mm},
	xscale=1.65,
	ne/.style={inner sep=2pt},
	empty/.style={circle,draw=black!75,fill=black!40, inner sep=1.5pt},
	condorcet/.style={circle,draw, inner sep=1.5pt},
	lbl/.style={fill=white, inner sep=1pt}
	]
	
	\draw node [inner sep=5pt] 
	(fake root) at (0, 0) 
	{
		\begin{array}[t]{c@{\quad}c@{\quad}c@{\quad}c}
		\toprule
		2&3&3&2\\
		\midrule
		\a&\b&\c&\d\\
		\b&\d&\a&\c\\
		\d&\c&\b&\a\\
		\c&\a&\d&\b\\
		\bottomrule\\[-5pt]
		\multicolumn{4}{c}{R}
		\end{array}
	};
	\draw node [rectangle, minimum width=2.5cm, minimum height=1cm] 
	(root) at (0, 0.6cm) 
	{}
	child {
		node (alpha) {$R_\alpha$}
		child {
			node[empty] (1) {}
			child {
				node[ne] (beta) {$R_\beta$}
				child {
					node[condorcet] (ca) {$\a$}
					edge from parent node[lbl] {${}+ \yes{\b}\no{\a\d\c}$}
				}
				child {
					node[empty] (2) {}
					child {
						node[condorcet] (cb) {$\b$}
						edge from parent node[lbl] {${}+3\cdot \yes{\d}\no{\c\b\a}$}
					}
					edge from parent node[lbl] {${}- \no{\a\b\c}\yes{\d}$}
				}
				edge from parent node[lbl] {${}-2\cdot \no{\c\a}\yes{\b\d}$}
			}
			edge from parent node[lbl] {${}- \no{\d\c\a}\yes{\b}$}
		}
		child {
			node[empty] (r2) {}
			child {
				node[condorcet] (cc) {$\c$}
				edge from parent node[lbl] {${}+ \yes{\a}\no{\c\d\b}$}
			}
			edge from parent node[lbl] {${}-2\cdot \no{\b\d\c}\yes{\a}$}
		}
		edge from parent 
		node[lbl] {${}+2\cdot \yes{\a\b}\no{\c\d}$}
	}
	child {
		node (alphaprime) {$R_\alpha'$}
		child {
			node[empty] (r4) {}
			child {
				node[condorcet] (cbprime) {$\b$}
				edge from parent node[lbl] {${}+ \yes{\d}\no{\b\a\c}$}
			}
			edge from parent node[lbl] {${}-2\cdot \no{\c\a\b}\yes{\d}$}
		}
		child {
			node[empty] (1prime) {}
			child {
				node[ne] (betaprime) {$R_\beta'$}
				child {
					node[empty] (2prime) {}
					child {
						node[condorcet] (cc) {$\c$}
						edge from parent node[lbl] {${}+3\cdot \yes{\a}\no{\b\c\d}$}
					}
					edge from parent node[lbl] {${}- \no{\d\c\b}\yes{\a}$}
				}
				child {
					node[condorcet] (cd) {$\d$}
					edge from parent node[lbl] {${}+ \yes{\c}\no{\d\a\b}$}
				}
				edge from parent node[lbl] {${}-2\cdot \no{\b\d}\yes{\c\a}$}
			}
			edge from parent node[lbl] {${}- \no{\a\b\d}\yes{\c}$}
		}
		edge from parent node[lbl] {${}+2\cdot \yes{\d\c}\no{\b\a}$}
	}
	;
	\end{tikzpicture}
	\caption{Computer-aided proof of \Cref{thm:anonymous} in graphical form, showing that there is no Condorcet extension that satisfies participation for $m\ge 4$ and $n\ge 12$. See \Cref{sec:maximin-kemeny} for an explanation of how to read this diagram.}
	\label{fig:tree12}
\end{figure*}

\subsection{Incremental Proof Discovery}

The SAT encoding described in \Cref{subsec:sat-encoding} only concerns pairwise voting rules, yet none of the (negative) results in this paper require or use this assumption. This is the product of multiple rounds of generating and evaluating SAT formulas, extracting MUSes, and using the insights generated by this as `educated guesses' to solve more general problems.

Following the process as described so far led to a proof that for $4$~alternatives and $12$~voters, there is no pairwise Condorcet extension that satisfies participation. That proof used the assumption of pairwiseness, \ie it assumed that the voting rule returns the same alternative on profiles inducing the same weighted tournament. However, intriguingly, the preference profiles mentioned in the proof did not contain all $4! = 24$ possible preference relations over $\{\a,\b,\c,\d\}$. In fact, it only used 10 of the possible orders.
Further, each profile included $R_0 = \{\a\b\d\c,\b\d\c\a,\c\a\b\d,\d\c\a\b\}$ as a subprofile. As we argued at the start of \Cref{subsec:sat-encoding}, it is intractable to search over the entire space of preference profiles.  On the other hand, it is much easier to merely search over all extensions of $R_0$ that contain at most $n = 12$ voters and only contain copies of the 10 orders previously identified. The SAT formula produced by doing exactly this turned out to be unsatisfiable, and a small MUS extracted from it gave rise to \Cref{thm:anonymous}.

The proof of \Cref{thm:optimist} for 17 voters was obtained by running \Cref{alg:write_formula} with $T_0$ initialized to the weighted tournament induced by the initial profile $R$ used in the proof of \Cref{thm:anonymous}. Before finding this tournament, we tried various other tournaments as $T_0$, including ones featuring in Moulin's original proof, and ones occurring at other steps in the proof of \Cref{thm:anonymous}, but $R$ turned out to give the best (and indeed a tight) bound, and additionally exhibits a lot of symmetry that was also present in the MUS we extracted.

\section{Main Result}\label{sec:results}

We are now in a position to state and prove our main claim that Condorcet extensions cannot avoid the no-show paradox for $12$~or more voters (when there are at least $4$~alternatives), and that this result is optimal.

\begin{theorem}
	\label{thm:anonymous}
	There is no Condorcet extension that satisfies participation for $m\ge 4$ and $n\ge 12$. 
\end{theorem}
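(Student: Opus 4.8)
The plan is to prove the impossibility for the base case $m=4$ and then lift it to all $m\ge 4$ using \Cref{lem:avoid-last}: given any witnessing construction on $\{\a,\b,\c,\d\}$, we append the surplus alternatives $\x_1,\dots,\x_{m-4}$ at the bottom of every voter's ranking in every profile involved, and \Cref{lem:avoid-last} guarantees that $f$ keeps choosing from $\{\a,\b,\c,\d\}$, so the same argument goes through verbatim. It is worth being explicit about the quantifier on $n$: a voting rule is defined on every profile whose electorate is a non-empty subset of the fixed pool $\mathcal N$, and participation only constrains $f$ on electorates that differ by a single voter. Hence it suffices to exhibit a finite collection of profiles, each using at most $12$ voters, on which participation and Condorcet-consistency clash; these profiles are sub-electorates of $\mathcal N$ as soon as $n=|\mathcal N|\ge 12$, which is exactly the stated bound.

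For the core $m=4$ argument I would run a case analysis on the value of $f(R)$ at a single, cleverly chosen base profile $R$ (with $10$ voters), propagating the constraint through a tree of profiles obtained by adding and removing voters, exactly in the style of the proofs of \Cref{thm:maximin} and \Cref{thm:kemeny}. The two propagation rules I would use repeatedly are immediate consequences of participation: (i) if $R'=R+{\pref_j}$ and $U\subseteq A$ is closed upward under $\pref_j$, then $f(R)\in U$ implies $f(R')\in U$, since $f(R')\pref_j f(R)$; and (ii) if $R'=R-i$ and $D\subseteq A$ is closed downward under $\pref_i$, then $f(R)\in D$ implies $f(R')\in D$, since $f(R)\pref_i f(R')$. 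A leaf of the tree is a profile possessing a \emph{strict} Condorcet winner $\z$: there Condorcet-consistency forces $f=\z$, so if the set propagated to that leaf excludes $\z$ we reach a contradiction and the corresponding value of $f(R)$ is eliminated. The objective is to eliminate all four possibilities $f(R)\in\{\a,\b,\c,\d\}$.

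A key simplification I would exploit is symmetry. I would look for a base profile $R$ admitting a non-trivial automorphism $\sigma$ of the alternatives (a relabelling that fixes $R$), for instance $\sigma=(\a\,\d)(\b\,\c)$. Such a $\sigma$ sends the sub-argument eliminating one pair of outcomes to the sub-argument eliminating the complementary pair, so that after handling, say, the outcomes in $\{\a,\b\}$ one obtains the cases $\{\c,\d\}$ for free by applying $\sigma$ to every profile and every added or removed voter in the tree. This roughly halves the work and produces the left--right mirror structure one expects in the proof diagram. Concretely, I would first add a block of ``top'' voters to push $f(R)$ into a two-element up-set such as $\{\a,\b\}$, and then, for each surviving outcome, peel off and add further voters (always moving the propagated set only in a direction permitted by (i)--(ii)) until reaching a profile whose strict Condorcet winner lies outside the propagated set.

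The hard part is finding the base profile $R$ together with the exact sequences of voter additions and deletions that make every branch terminate at a \emph{genuine} strict Condorcet winner, while never exceeding $12$ voters and never violating the monotonicity of (i)--(ii). This is a delicate combinatorial search: the intermediate profiles must be arranged so that a single extra voter tips the weighted tournament into one with precisely the desired Condorcet winner, and all four outcome cases have to close simultaneously. It is exactly this search that the paper delegates to a SAT solver, extracting a minimal unsatisfiable subset and reading it off as the tree of \Cref{fig:tree12}; by hand one would essentially have to reverse-engineer such a profile. Once $R$ and the moves are fixed, the remaining verification is the routine bookkeeping of checking the majority margins at each leaf, which I would carry out tournament by tournament.
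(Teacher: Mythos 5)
Your scaffolding is exactly right and matches the paper's proof in every structural respect: the paper's base profile $R$ does have $10$ voters, it is fixed by the relabelling $\a\b\c\d\mapsto\d\c\b\a$ (your $\sigma=(\a\,\d)(\b\,\c)$), the first move adds two $\a\b\c\d$ voters to reduce to the case $f(R)\in\{\a,\b\}$ with the complementary case handled by symmetry, your propagation rules (i) and (ii) are precisely how the edge labels in \Cref{fig:tree12} are read, every leaf is a profile with a Condorcet winner outside the propagated set, no intermediate profile exceeds $12$ voters, and the lift to $m>4$ goes through \Cref{lem:avoid-last} exactly as you describe.

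However, there is a genuine gap, and you name it yourself: you do not exhibit the base profile or the sequences of additions and deletions, and without these the theorem is not proved. The entire mathematical content of this result is the explicit witness --- the profile $R$ consisting of $2\times\a\b\d\c$, $3\times\b\d\c\a$, $3\times\c\a\b\d$, $2\times\d\c\a\b$, together with the specific tree of moves in \Cref{fig:tree12} (e.g.\ from $R_\alpha=R+2\cdot\a\b\c\d$, the sub-case $f(R_\alpha)=\a$ is killed by removing $2\cdot\b\d\c\a$ and adding $\a\c\d\b$ to reach Condorcet winner $\c$, and so on down to $R_\beta$). A correct description of the search space is not a substitute for the object found in it: one must actually verify that each leaf profile has the claimed Condorcet winner and that each edge respects participation, and these facts depend entirely on the concrete numbers. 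As submitted, your argument establishes only that \emph{if} such a profile and tree exist with at most $12$ voters, then the theorem holds --- which is the statement of the search problem, not its solution.
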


\begin{proof}
	The proof follows the structure depicted in \Cref{fig:tree12}. Let $R$ be the preference profile shown there.
	
	Since $R$ remains fixed after relabelling alternatives according to $\a\b\c\d\mapsto\d\c\b\a$, we may assume without loss of generality that $f(R)\in\{\a,\b\}$. (An explicit proof in case $f(R)\in\{\c,\d\}$ is indicated in \Cref{fig:tree12}.) 
	
	By participation, it follows from $f(R)\in\{\a,\b\}$ that also $f(R_\alpha:=R + 2 \cdot \a\b\c\d)\in\{\a,\b\}$ since the voters with preferences $\a\b\c\d$ cannot be worse off by joining the electorate. 
	If $f(R_\alpha)=\a$, again by participation, removing $2$~voters
	with preferences $\b\d\c\a$ does not change the winning alternative (so $f(R_\alpha - 2 \cdot \b\d\c\a)=\a$), and neither does adding $\a\c\d\b$, so $f(R_\alpha - 2 \cdot \b\d\c\a + \a\c\d\b)=\a$, which, however, is in conflict with $R_\alpha - 2 \cdot \b\d\c\a + \a\c\d\b$ having a Condorcet winner, $\c$. 
	
	Thus we must have $f(R_\alpha)=\b$, which implies that $f(R_\alpha - \d\c\a\b)=\b$, and thus $f(R_\beta:=R_\alpha - \d\c\a\b - 2 \cdot \c\a\b\d)\in\{\b,\d\}$. 
	
	We again proceed by cases:
	If $f(R_\beta)=\b$, we can add a voter $\b\a\d\c$ to obtain a profile with Condorcet winner $\a$, which contradicts participation. 
	But then, if $f(R_\beta)=\d$, we get that $f(R_\beta - \a\b\c\d)=\d$ and, by another application of participation, that $f(R_\beta - \a\b\c\d + 3 \cdot \d\c\b\a)=\d$ in contrast to the existence of Condorcet winner $\b$, a contradiction.
	
	If $m>4$, we add bad alternatives $\x_1, \x_2, \dots, \x_{m-4}$ to the bottom of $R$ and all other voters. By \Cref{lem:avoid-last}, $f$ chooses from $\{\a,\b,\c,\d\}$ at each step, completing the proof.
\end{proof}

The following result establishes that our bound on the number of voters is tight. A very useful feature of our computer-aided approach is that we can easily add additional requirements for the desired voting rule. Two common requirements for voting rules are that they should only return alternatives that are \emph{Pareto-optimal} and contained in the \emph{top cycle} (also known as the \emph{Smith set}) \citep[see, \eg][]{Fish77a}.

\begin{theorem}
	\label{thm:positive-pairwise}
	There is a Condorcet extension $f$ that satisfies participation for $m=4$ and $n\leq 11$. Moreover, $f$ is pairwise, Pareto-optimal, and a refinement of the top cycle. 
\end{theorem}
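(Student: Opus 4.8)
The plan is to prove this existence statement by exhibiting a concrete witness, produced and verified via the SAT machinery of \Cref{sec:sat}. Since the claim concerns a \emph{pairwise} rule, I would instantiate the encoding of \Cref{subsec:sat-encoding} on $A=\{\a,\b,\c,\d\}$: run \Cref{alg:write_formula} to enumerate, by breadth-first search from the all-zero tournament, every weighted tournament induced by a profile with at most $n=11$ voters, introducing variables $x_{T,\x}$ for each such tournament $T$ and each alternative $\x$, together with the $\formulaname{non-empty}_T$ and $\formulaname{mutex}_T$ clauses (forcing a single winner), the $\formulaname{condorcet}_T$ clauses, and the $\formulaname{participation}_{T,{\pref}}$ clauses connecting each tournament at level $k\le 10$ to its successors at level $k+1$.

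To additionally guarantee Pareto-optimality and top-cycle refinement, I would adapt the $\formulaname{condorcet}$-style clauses as indicated at the end of \Cref{subsec:sat-encoding}. Both properties are computable from the weighted tournament together with the voter count recorded during the BFS: an alternative $\x$ is Pareto-dominated exactly when $g_T(\y,\x)$ equals the number of voters for some $\y$, while membership in the top cycle depends only on the signs of the margins. For each tournament I would therefore add a unit clause $\lnot x_{T,\x}$ for every $\x$ that is either Pareto-dominated or outside the top cycle, thereby forbidding $f$ from ever selecting such an alternative.

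The resulting CNF formula can be handed to Plingeling, and I expect the instance to be \emph{satisfiable}. Any satisfying assignment directly defines a pairwise rule via $x_{T,\x}\mapsto(f(T)=\x)$; setting $f(R):=f(T_R)$ yields the desired rule on profiles. Because the $\formulaname{non-empty}$, $\formulaname{mutex}$, $\formulaname{condorcet}$, $\formulaname{participation}$, Pareto, and top-cycle clauses are imposed on \emph{every} tournament reachable within $11$ voters, and since every profile with at most $11$ voters induces exactly such a tournament (and adding or removing a voter moves between adjacent BFS levels), the model-level satisfaction lifts to the profile-level statement that $f$ is a Condorcet-consistent, participating, Pareto-optimal refinement of the top cycle on all profiles with $m=4$ and $n\le 11$.

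The main practical obstacle is the combinatorial size of the instance: naively there are $24^n$ profiles, which is hopeless already for $n=11$. The pairwise restriction is what makes the approach viable, since the number of induced weighted tournaments grows only like ${\approx}\,1.5^n$, keeping $n=11$ comfortably within reach. The main conceptual point to get right is the \emph{lifting} argument above---that imposing the axioms tournament-by-tournament across all levels up to $11$ is equivalent to the profile-level statement---which hinges on participation, Condorcet-consistency, and top-cycle membership being invariant across profiles that induce the same tournament, with the voter count tracked so that Pareto-domination (the only count-sensitive condition) is handled correctly. Once the solver returns a model, no further argument is required beyond the observation that the extracted rule can be independently checked against all four axioms on the enumerated tournaments.
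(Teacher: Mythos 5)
Your proposal is correct and is essentially the paper's own approach: the authors instantiate exactly this tournament-based SAT encoding (with the adapted $\formulaname{condorcet}$-style clauses enforcing Pareto-optimality and top-cycle membership), find the instance satisfiable for $n\leq 11$, and present the resulting pairwise rule as a look-up table over the $1{,}204{,}215$ inducible weighted tournaments. Your added care about the lifting argument and the voter-count-sensitivity of Pareto-domination is consistent with (and slightly more explicit than) what the paper states.
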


The Condorcet extension $f$ is given as a look-up table, which is derived from the output of a SAT solver. The look-up table lists all $1,204,215$ weighted tournaments inducible by up to 11 voters and assigns each an output alternative (see \Cref{fig:lookup-table} for an excerpt). The relevant text file has a size of 28 MB (gzipped 4.5 MB). 

Comparing this voting rule with known voting rules, it turns out that it picks a maximin winner in 99.8\% and a Kemeny winner in 98\% of all weighted tournaments. Moreover, the rule agrees with the maximin rule with lexicographic tie-breaking on 95\% of weighted tournaments. The similarity with the maximin rule is interesting insofar as a well-documented flaw of the maximin rule is that it fails to be a refinement of the top cycle (and may even return Condorcet losers). Our computer-generated rule always picks from the top cycle while it remains very close to the maximin rule.

80\% of the considered weighted tournaments admit a Condorcet winner, which uniquely determines the output of the rule; this can be used to reduce the size of the look-up table.

\begin{figure}[t]
	\centering
	\begin{minipage}{0.47\columnwidth}
		{\small\ttfamily
			a,\#1,(1,1,1,1,1,1)\\
			a,\#1,(1,1,1,1,1,-1)\\
			a,\#1,(1,1,1,-1,1,1)\\
			a,\#1,(1,1,1,-1,-1,1)\\
			a,\#1,(1,1,1,1,-1,-1)\\
			a,\#1,(1,1,1,-1,-1,-1)\\
			b,\#1,(-1,1,1,1,1,1)\\
			b,\#1,(-1,1,1,1,1,-1)\\
			b,\#1,(-1,-1,1,1,1,1)\\
			b,\#1,(-1,-1,-1,1,1,1)\\
			{\color{black!70}b,\#1,(-1,1,-1,1,1,-1)}\\
			{\color{black!60}b,\#1,(-1,-1,-1,1,1,-1)}\\
			{\color{black!50}c,\#1,(1,-1,1,-1,1,1)}\\
			{\color{black!40}c,\#1,(1,-1,1,-1,-1,1)}
		}
	\end{minipage}
	\begin{minipage}{0.47\columnwidth}
		{\small\ttfamily
			{\color{black!40}a,\#11,(9,11,3,9,1,-9)}\\
			{\color{black!50}a,\#11,(11,9,3,7,1,-9)}\\
			{\color{black!60}c,\#11,(5,-9,-1,-11,-1,7)}\\
			{\color{black!70}c,\#11,(5,-9,-1,-11,-1,5)}\\
			c,\#11,(3,-11,-1,-9,1,7)\\
			c,\#11,(3,-11,-3,-9,1,7)\\
			c,\#11,(3,-11,-3,-11,-1,7)\\
			b,\#11,(-1,3,-5,-3,5,-3)\\
			b,\#11,(-3,3,-7,-3,5,-3)\\
			b,\#11,(-3,1,-7,-3,5,-3)\\
			c,\#11,(-3,1,-5,-5,5,-1)\\
			a,\#11,(3,7,11,-3,9,11)\\
			a,\#11,(3,7,11,-3,9,9)\\
			a,\#11,(3,7,11,-5,9,11)
		}
	\end{minipage}
	\caption{Excerpt of look-up table giving a pairwise Condorcet extension satisfying participation for $n\leq 11$ voters (from \Cref{thm:positive-pairwise}). Each row lists a weighted tournament as $(g_R(\a,\b),g_R(\a,\c),g_R(\a,\d),g_R(\b,\c),g_R(\b,\d),g_R(\c,\d))$ with a chosen alternative, and with the number of voters inducing the tournament.}
	\label{fig:lookup-table}
\end{figure}

\section{Set-valued voting rules}
\label{sec:set-valued}

A drawback of voting rules, as we defined them so far, is that that the requirement to always return a single winner is in conflict with basic fairness conditions, namely anonymity and neutrality. A large part of the social choice literature therefore deals with set-valued voting rules, where ties are usually assumed to be eventually broken by some tie-breaking mechanism.

A \emph{set-valued voting rule} (sometimes known as a voting \textit{correspondence} or as an \textit{irresolute} voting rule) is a function $F\colon\mathcal{R}^\electorates \rightarrow 2^A\setminus\{\emptyset\}$ that assigns each preference profile $R$ a non-empty set of alternatives. The function $F$ is a \emph{(set-valued) Condorcet extension} if for every preference profile $R$ that admits a Condorcet winner $\x$, we have $F(R) = \{\x\}$. 

Following the work of \citet{Pere01a} and \citet{JPG09a}, we seek to study the occurrence of the no-show paradox in this setting. To do so, we need to define appropriate notions of participation, and for this we will need to specify agents' preferences over \textit{sets} of alternatives. Here, we use the \textit{optimistic} and \textit{pessimistic preference extensions}. An optimist prefers sets with better most-preferred alternative, while a pessimist prefers sets with better least-preferred alternative. For example, if $U = \{\a,\b,\d\}$ and $V = \{\b,\c\}$, then an optimist with preferences $\a\b\c\d$ prefers $U$ to $V$, while a pessimist prefers $V$ to $U$. With these notions, we extend the participation property to set-valued voting rules.

\begin{definition}
	\label{def:set-valued-participation}
	A set-valued voting rule $F$ satisfies \emph{optimistic participation} if $\max_{\pref_i} F(R + {\pref_i}) \pref_i \max_{\pref_i} F(R)$. 
	
	A set-valued voting rule $F$ satisfies \emph{pessimistic participation} if $\min_{\pref_i} F(R) \pref_i \min_{\pref_i} F(R - i)$. 
\end{definition}

A set-valued voting rule $F$ is called \emph{resolute} if it always selects a single alternative, so that for all $R$ we have $|F(R)| = 1$. A (single-valued) voting rule $f$ is naturally identified with a resolute set-valued voting rule $F$; if $f$ satisfies participation, then this $F$ satisfies both optimistic and pessimistic participation. Hence, by \Cref{thm:positive-pairwise}, there is a (resolute) set-valued Condorcet extension $F$ that satisfies both optimistic and pessimistic participation. However, there might be hope that allowing voting rules to be irresolute also allows for participation to be attainable for more voters, and indeed this is the case.

\begin{theorem}
	\label{thm:positive-optimist}
	There is a set-valued Condorcet extension F that satisfies optimistic participation for $m = 4$ and $n\leq 16$, and also is Pareto-optimal and a refinement of the top cycle.
\end{theorem}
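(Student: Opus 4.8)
The plan is to prove this existence claim constructively, using the SAT methodology of \Cref{sec:sat}: I will build a propositional formula whose models are exactly the \emph{pairwise} set-valued Condorcet extensions on weighted tournaments inducible by at most $16$ voters that satisfy optimistic participation, Pareto-optimality, and top-cycle refinement, and then exhibit a satisfying assignment produced by the solver. Pairwiseness is merely a restriction on the space of candidate rules, so producing a single pairwise rule with all the required properties is enough to prove the theorem; and it is exactly this restriction that makes the search feasible, replacing the $\Theta(24^n)$ preference profiles by the roughly $1.5^n$ weighted tournaments inducible by at most $n$ voters.

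First I would adapt the encoding of \Cref{subsec:sat-encoding}. I keep a variable $x_{T,\x}$ for every tournament $T$ inducible by at most $16$ voters and every $\x\in\{\a,\b,\c,\d\}$, now read as $\x\in F(T)$. I include the clauses $\formulaname{non-empty}_T$ (so that $F(T)\neq\emptyset$) and $\formulaname{condorcet}_T$ (which still forces $F$ to return the singleton $\{\x\}$ whenever $T$ admits Condorcet winner $\x$), but I crucially \emph{omit} the $\formulaname{mutex}_T$ clauses, so that $F$ may be genuinely set-valued. The clauses $\formulaname{participation}_{T,\pref}$ already encode optimistic participation: $\formulaname{participation}_{T,\pref}$ demands that for every $\x\in F(T)$ there be some $\y\pref\x$ with $\y\in F(T+\pref)$, and applying this to the optimist's most-preferred element of $F(T)$ yields precisely $\max_{\pref} F(T+\pref)\pref\max_{\pref} F(T)$.

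Next I would add the two extra requirements, both of which are functions of the weighted tournament alone. An alternative $\x$ is Pareto-dominated exactly when some $\y$ satisfies $g_R(\y,\x)=|N|$ (every voter prefers $\y$ to $\x$), and top-cycle membership is read off the majority relation as the usual minimal dominant set. I therefore add the unit clause $\lnot x_{T,\x}$ for each $\x$ that is Pareto-dominated in $T$ or lies outside the top cycle of $T$. With these additions, the models of the resulting CNF are in bijection with the pairwise set-valued Condorcet extensions on $\le 16$-voter tournaments that are Pareto-optimal, refine the top cycle, and satisfy optimistic participation.

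Finally I would generate this formula by running \Cref{alg:write_formula} with $T_0$ the all-zero tournament and $n=16$ (modified to drop \formulaname{mutex} and to emit the Pareto and top-cycle units), hand it to Plingeling, and read the returned satisfying assignment as the look-up table defining $F$. A model for $\le 16$ voters simultaneously witnesses the claim for every $n\le 16$, since such an $F$ is defined on all smaller electorates and the participation constraints linking $k$- and $(k+1)$-voter tournaments for $k\le 15$ all stay within range. The main obstacle is purely computational rather than mathematical: the number of tournaments at $n=16$ pushes the formula to the very edge of what can be generated and solved (the paper notes that $n\approx 16$ is \emph{just} enough), so it is the pairwise reduction, the group/clause MUS engineering, and the choice of solver---not any further argument---that make the proof go through.
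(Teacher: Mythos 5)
Your overall strategy---encode the problem in CNF, drop the $\formulaname{mutex}$ clauses to allow set-valued outputs, add unit clauses enforcing Pareto-optimality and top-cycle membership, and read off a witness from a satisfying assignment---is exactly the paper's methodology for its positive results. However, your proof hinges on restricting the search to \emph{pairwise} set-valued rules, and this is precisely where it breaks: the paper explicitly reports, immediately after \Cref{thm:positive-optimist}, that ``the SAT solver indicates that no such set-valued voting rule is pairwise.'' In other words, the CNF formula you describe (with variables indexed by weighted tournaments inducible by at most $16$ voters) is \emph{unsatisfiable}; Plingeling would return UNSAT rather than a model, and your final step---``read the returned satisfying assignment as the look-up table defining $F$''---has nothing to read. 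Your remark that pairwiseness is ``merely a restriction on the space of candidate rules'' is logically sound but vacuous here, since the restricted space is empty. Contrast this with \Cref{thm:positive-pairwise}, where the witness for $n\leq 11$ \emph{is} pairwise and your encoding would work verbatim.

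To actually prove the theorem one must search over rules that can distinguish between profiles inducing the same weighted tournament, which reintroduces the $\Theta(24^n)$ blow-up you correctly identify as intractable at $n=16$. The paper does not spell out how it tamed this for \Cref{thm:positive-optimist}, but its discussion of incremental proof discovery indicates the general recipe: use insights from computer-generated proofs of weaker statements to restrict attention to profiles built from a small subset of the $24$ preference orders and/or to extensions of a fixed subprofile, so that the relevant profile space becomes enumerable. Some such additional idea is indispensable; the pairwise reduction alone cannot deliver this particular result. (Your encodings of optimistic participation via the $\formulaname{participation}$ clauses without $\formulaname{mutex}$, of Pareto domination via a unanimous majority margin, and of top-cycle membership as a function of the majority relation are all correct as far as they go.)
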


The SAT solver indicates that no such set-valued voting rule is pairwise. \Cref{thm:positive-optimist} is optimal in the sense that optimistic participation cannot be achieved if we allow for one more voter.

\begin{theorem}
	\label{thm:optimist}
	There is no set-valued Condorcet extension that satisfies optimistic participation for $m\ge 4$ and $n\ge 17$. 
\end{theorem}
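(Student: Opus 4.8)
The plan is to mimic the proof of \Cref{thm:anonymous} as closely as possible, replacing the single-valued reading of participation by its optimistic analogue and enlarging the voter budget from $12$ to $17$. Concretely, I would start from the same highly symmetric profile $R$ used in \Cref{thm:anonymous} (possibly after scaling, so that all additions and deletions in the argument keep the electorate size at most $17$), assume for contradiction that a set-valued Condorcet extension $F$ satisfies optimistic participation, and track along each edge of a proof diagram the \emph{optimistic outcome} $\max_{\pref}F(\cdot)$ of the voter $\pref$ added or removed on that edge. A contradiction is then obtained at the leaves, where every profile admits a Condorcet winner.

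Before the main diagram I would record the optimistic counterpart of \Cref{lem:avoid-last}, which reduces $m>4$ to $m=4$: if $F$ is a set-valued Condorcet extension satisfying optimistic participation, $R$ a profile, and $B\subsetneq A$ a set of bad alternatives that every voter ranks below every good alternative, then $F(R)\cap(A\setminus B)\neq\emptyset$. The proof is again by induction on $|N|$: for a single voter $F(R)$ is the singleton containing that voter's (good) top choice; for $|N|\ge 2$, pick any $i\in N$, apply the inductive hypothesis to $R-i$ to obtain that $\max_{\pref_i}F(R-i)$ is good, and then optimistic participation gives $\max_{\pref_i}F(R)\pref_i\max_{\pref_i}F(R-i)$, so that the optimistic outcome of $i$, and hence $F(R)$, still meets $A\setminus B$. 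Adding the bad alternatives $\x_1,\dots,\x_{m-4}$ to the bottom of every vote then keeps the optimistic outcome at every node inside $\{\a,\b,\c,\d\}$, exactly as in the proofs of Theorems~\ref{thm:maximin} and \ref{thm:anonymous}.

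The translation of optimistic participation into the diagram rests on two observations that keep the diagrams readable in the sense of \Cref{sec:maximin-kemeny}. First, at a circled node with Condorcet winner $\x$ we have $F=\{\x\}$, so the optimistic outcome there equals $\x$ for every voter. Second, for an upper set $G$ of a preference $\pref$ one has $\max_{\pref}F(R)\in G$ iff $F(R)\cap G\neq\emptyset$; hence an addition edge labelled by $\pref$ with green (upper) part $G$ reads ``if $F(R)$ meets $G$, then $F(R+\pref)$ meets $G$'', which is precisely the clause $\formulaname{participation}_{T_R,\pref}$ of \Cref{subsec:sat-encoding}. A deletion edge from $R$ to $R'=R-\pref$ is justified by the same clause applied at $R'$; in the typical step the current optimistic outcome is the $\pref$-worst alternative, which forces $F$ to be a singleton there and lets the argument continue as in the single-valued case. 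As in \Cref{thm:anonymous}, the automorphism $\a\b\c\d\mapsto\d\c\b\a$ of $R$, together with the further symmetries exhibited by the extracted MUS, collapses the top-level cases so that only one branch must be spelled out.

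The hard part is not any individual implication---each edge is a one-line application of optimistic participation or Condorcet-consistency---but the discovery of a diagram that simultaneously never exceeds $17$ voters, closes every branch at a Condorcet-winner node, and respects the subtler bookkeeping forced by set-valued outcomes: an addition edge propagates only the \emph{existential} statement ``$F$ meets $G$'', whereas pinning down a singleton at a deletion edge requires the optimistic outcome to sit at the very bottom of the removed vote. Keeping these two modes of reasoning consistent across a branching argument while staying within the tight budget of $17$ voters is exactly what the SAT search and MUS extraction of \Cref{sec:sat} deliver; my remaining task would be to verify the extracted diagram edge by edge and to present it, in the style of \Cref{fig:tree12}, together with the symmetry reduction and the avoid-last lemma above.
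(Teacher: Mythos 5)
There is a genuine gap, and it sits exactly where you flag the ``subtler bookkeeping'': your plan to transplant the mixed addition/deletion diagram of \Cref{thm:anonymous} does not go through under optimistic participation. An addition edge $R\to R+{\pref}$ propagates the existential statement ``$F$ meets the upper set $G$'' (since $\max_{\pref}F(R+{\pref})\pref\max_{\pref}F(R)$), but a deletion edge $R\to R-{\pref}$ only propagates the universal statement ``$F(R)\subseteq G$ for a lower set $G$'' (equivalently $\max_{\pref}F(R)\in G$). These two modes do not compose: after an addition step you know only that $F$ \emph{meets} some set, and that is not enough to certify $\max_{\pref}F(R)$ sits at the bottom of the vote you now want to delete --- knowing $\d\in F(R)$ does not give $\max_{\pref}F(R)=\d$. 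Your suggested fix (``the current optimistic outcome is the $\pref$-worst alternative, which forces $F$ to be a singleton'') is circular, because establishing that the optimistic outcome is the worst alternative already presupposes $F(R)=\{\d\}$. This is precisely why the paper proves \Cref{thm:pessimist} (deletion-only) and \Cref{thm:em-participation} (both extensions, recovering the $12$-voter diagram) as separate statements, and remarks explicitly that the proof of \Cref{thm:optimist} ``never remove[s] voters.''

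The paper's actual proof of \Cref{thm:optimist} is therefore structurally different from what you propose: it does start from the same $(2,3,3,2)$ base profile $R$, but the tree is addition-only and only two levels deep. Using the automorphism $\a\b\c\d\mapsto\d\c\b\a$, assume $F(R)$ meets $\{\a,\b\}$; add $2\cdot\a\b\c\d$ to reach $R_\alpha$, where $F$ still meets $\{\a,\b\}$; if $\a\in F(R_\alpha)$ add $3\cdot\a\c\b\d$ (Condorcet winner $\c$), and if $\b\in F(R_\alpha)$ add $5\cdot\b\a\c\d$ (Condorcet winner $\a$) --- in either case a contradiction at $10+2+5=17$ voters, which is exactly why the optimistic bound is $17$ rather than $12$. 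Your optimistic analogue of \Cref{lem:avoid-last} is correct and matches what the paper invokes for $m>4$, but the core of your argument needs to be replaced by an addition-only certificate; deferring its discovery to the SAT solver leaves the theorem unproved as written.
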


\begin{proof}
	Let $F$ be such a function, and consider the following 10-voter profile $R$:
	\[\begin{tikzpicture}
	[->,
	level distance=14mm,
	sibling distance=33mm,
	level 2/.style={level distance=11mm, sibling distance=17mm},
	xscale=1.55,
	ne/.style={inner sep=2pt},
	empty/.style={circle,draw=black!75,fill=black!40, inner sep=1.5pt},
	condorcet/.style={circle,draw, inner sep=1.5pt},
	lbl/.style={fill=white, inner sep=1pt, font=\small}
	]
	
	\draw node	[inner sep=5pt, scale=0.9] 
	(fakeroot) at (0, 0) 
	{
		\begin{array}[t]{c@{\quad}c@{\quad}c@{\quad}c}
		\toprule
		2&3&3&2\\
		\midrule
		\a&\b&\c&\d\\
		\b&\d&\a&\c\\
		\d&\c&\b&\a\\
		\c&\a&\d&\b\\
		\bottomrule\\[-5pt]
		\multicolumn{4}{c}{R}
		\end{array}
	};
	\draw node [rectangle, minimum width=3cm, minimum height=1.25cm] 
	(root) at (0, 0.6cm) 
	{}
	child {
		node (alpha) {$R_\alpha$}
		child {
			node[condorcet] (1) {$\a$}
			edge from parent node[lbl,xshift=-0.2cm] {${}+5\cdot \yes{\b}\no{\a\c\d}$}
		}
		child {
			node[condorcet] (r2) {$\c$}
			edge from parent node[lbl] {${}+3\cdot \yes{\a}\no{\c\b\d}$}
		}
		edge from parent node[lbl,xshift=-0.3cm] {${}+2\cdot \yes{\a\b}\no{\c\d}$}
	}
	child {
		node (alphaprime) {$R_\alpha'$}
		child {
			node[condorcet] (r4) {$\b$}
			edge from parent node[lbl,xshift=-0.2cm] {${}+3\cdot \yes{\d}\no{\b\c\a}$}
		}
		child {
			node[condorcet] (1prime) {$\d$}
			edge from parent node[lbl] {${}+5\cdot \yes{\c}\no{\d\b\a}$}
		}
		edge from parent node[lbl, xshift=0.2cm] {${}+2\cdot \yes{\d\c}\no{\b\a}$}
	}
	;
	\end{tikzpicture}\]
	
	Suppose that either $\a\in F(R)$ or $\b\in F(R)$. (The case of $\c\in F(R)$ or $\d\in F(R)$ is symmetric.) Then let $R_\alpha := R + 2\cdot \a\b\c\d$. 
	By optimistic participation, we then have either $\a\in F(R_\alpha)$ or $\b\in F(R_\alpha)$. If we had $\a\in F(R_\alpha)$, then also $\a\in F(R_\alpha + 3\cdot \a\c\b\d)$ but this profile has Condorcet winner $\c$, and if $\b\in F(R_\alpha)$ then also $\b\in F(R_\alpha + 5\cdot \b\a\c\d)$ but this profile has Condorcet winner $\a$. This is a contradiction.
	
	This argument extends to more than $4$~alternatives by appealing to a set-valued analogue of \Cref{lem:avoid-last}.
\end{proof}

Inspecting Moulin's original proof \citep{Moul88b} shows that it also establishes an impossibility for optimistic participation (for $25$~voters). Apparently unaware of this, \citet{JPG09a} explicitly establish such a result for $27$~voters and $5$~alternatives. It is worth observing that each step of the proof of \Cref{thm:optimist} involves \textit{adding} voters to the current profile, and we never remove voters. In light of \Cref{def:set-valued-participation}, this is the reason why the proof establishes a result for optimistic participation. If we restrict ourselves to deleting voters, we obtain a result for pessimistic participation.

\begin{theorem}
	\label{thm:pessimist}
	There is no set-valued Condorcet extension that satisfies pessimistic participation for $m\ge 4$ and $n\ge 14$. On the other hand, for $m = 4$ and $n\leq 13$, there exists such a set-valued voting rule.
\end{theorem}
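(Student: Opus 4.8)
The plan is to establish the two halves separately, mirroring the treatment of the optimistic case in \Cref{thm:positive-optimist} and \Cref{thm:optimist}. For the possibility ($m=4$, $n\le 13$) I would reuse the SAT encoding of \Cref{subsec:sat-encoding}, but with the $\formulaname{mutex}$ clauses dropped (so that models are genuinely set-valued) and the $\formulaname{participation}$ clauses rewritten to capture \emph{pessimistic} participation, i.e.\ to encode the constraint $\min_{\pref_i} F(R)\pref_i \min_{\pref_i} F(R-i)$ on the worst selected alternative when a voter is deleted. Running the solver for $n=13$ should return a satisfying assignment, which is exactly a set-valued Condorcet extension satisfying pessimistic participation; since its domain already contains every profile with at most $13$ voters, this single rule witnesses the claim for all $n\le 13$.

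For the impossibility the central observation is the duality flagged in the discussion after \Cref{thm:optimist}: optimistic participation controls $\max F$ when \emph{adding} a voter, whereas pessimistic participation controls $\min F$ when \emph{deleting} one. So I would first record the propagation rule dual to the one driving the optimistic diagram. If $\x\in F(R)$ and voter $i$ ranks $\x$ last, then $\min_{\pref_i} F(R)=\x$, and pessimistic participation forces $\x \pref_i \min_{\pref_i} F(R-i)$, which (as $\x$ is $i$'s worst alternative) is only possible if $\x\in F(R-i)$. More generally, the invariant ``$F\cap\{\x,\y\}\neq\emptyset$'' is preserved when one deletes a voter who ranks $\{\x,\y\}$ in the bottom two positions. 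This is the reading that turns a deletion arrow in a proof diagram into a valid implication, and it is the exact mirror image of the ``rank first / add voter'' reading used for the optimistic diagram.

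With this reading fixed, I would look for a starting profile $R$ on $14$~voters that is invariant under the relabelling $\a\b\c\d\mapsto\d\c\b\a$ (the automorphism already exploited in \Cref{thm:anonymous} and \Cref{thm:optimist}). The symmetry lets me assume without loss of generality that $F(R)\cap\{\a,\b\}\neq\emptyset$. Starting from this assumption I would repeatedly delete voters---always choosing voters that rank the currently tracked alternatives at the bottom---branching on which tracked alternative survives in $F$, until every branch reaches a profile admitting a Condorcet winner \emph{outside} the tracked set, contradicting the propagated invariant. Because every profile occurring in such an argument has at most $14$~voters, it lies in the domain of any rule defined for $n\ge 14$ voters, so the impossibility holds for all $n\ge 14$; the step from $m=4$ to $m\ge 4$ is handled, as in \Cref{thm:optimist}, by the set-valued analogue of \Cref{lem:avoid-last}, appending the surplus alternatives to the bottom of every voter's ranking.

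The hard part is finding the profile $R$ together with the accompanying deletion tree. A naive dualisation will not suffice: literally reversing the optimistic argument only deletes from a $17$-voter profile and reproves the weaker bound $17$, so the tight value $14$ requires a genuinely more economical profile. I would therefore obtain it by re-running the SAT/MUS pipeline of \Cref{sec:sat}, searching over extensions of a small symmetric seed profile for an unsatisfiable instance, extracting a small (group) MUS, and reading off the deletion diagram. Verifying the resulting proof by hand then reduces to two routine checks at each leaf---that the claimed Condorcet winner is correct (by computing the majority margins) and that every voter deleted along the branch ranks the tracked alternatives last---so that the whole diagram becomes a self-contained human-readable certificate.
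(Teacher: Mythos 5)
Your proposal takes essentially the same route as the paper: the paper's (sketched) proof likewise dualises an addition-based proof diagram into a deletion tree driven by exactly the propagation rule you state (pessimistic participation preserves $F\cap\{\x,\y\}\neq\emptyset$ when deleting a voter with $\x,\y$ at the bottom), obtains the tree and the $n\le 13$ possibility via the same SAT/MUS machinery, and lifts to $m>4$ via the analogue of \Cref{lem:avoid-last}. The only substantive detail you leave to the solver is the one the paper supplies: the seed profile is the symmetric $14$-voter profile $R + 2\cdot\a\b\c\d + 2\cdot\d\c\b\a$, with $R$ the profile of \Cref{fig:tree12}, i.e.\ a dualisation of the mixed addition/deletion proof of \Cref{thm:anonymous} rather than of the purely additive optimistic proof.
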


\begin{proof}[Sketch]
	The proof has a similar structure to the proof of \Cref{thm:anonymous}, displayed in \Cref{fig:tree12}. The initial profile of this proof is $R + 2\cdot\a\b\c\d + 2\cdot\d\c\b\a$, taking $R$ to be the profile of \Cref{fig:tree12}. We further replace proof steps in which voters are added by similar ones where voters are deleted, and invoke pessimistic participation at each such step to obtain a contradiction.
\end{proof}

This result strengthens a result of \citet{JPG09a}, who show that for $m \ge 5$ no set-valued Condorcet extension satisfying a property called ``weak translation invariance'' can also satisfy pessimistic participation. Our proof does not need the extra assumption, already works for $m = 4$ alternatives, and uses just 14 instead of 971 voters.\footnote{The large number of voters is due to several applications of the ``weak translation invariance'' axiom, each of which adds $5!=120$ voters to the preference profile under consideration.}

As previously observed, adding voters in our impossibility proofs corresponds to optimistic participation, while removing voters corresponds to pessimistic participation. In the proof of \Cref{thm:anonymous}, we use both operations, which allows for a tighter bound of just $12$~voters. In the set-valued setting, we can formulate this result in a slightly stronger way.

\begin{theorem}
	\label{thm:em-participation}
	There is no set-valued Condorcet extension that satisfies optimistic and pessimistic participation simultaneously for $m\ge 4$ and $n\ge 12$. On the other hand, for $m = 4$ and $n\leq 11$ such a set-valued rule exists (and also is Pareto-optimal and a refinement of the top cycle).
\end{theorem}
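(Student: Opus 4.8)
The plan is to derive the impossibility by \emph{lifting} the proof of \Cref{thm:anonymous} to set-valued rules, and to read off the possibility directly from \Cref{thm:positive-pairwise}. The possibility is immediate: identify the single-valued rule $f$ provided by \Cref{thm:positive-pairwise} (for $m=4$, $n\le 11$) with the resolute rule $F(R)=\{f(R)\}$. For a resolute rule $\max_{\pref_i}F(R)=\min_{\pref_i}F(R)=f(R)$, so both optimistic and pessimistic participation collapse to ordinary participation of $f$, which holds; and $F$ inherits Condorcet-consistency, Pareto-optimality, and the top-cycle property from $f$.

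For the impossibility I would replay the proof diagram of \Cref{fig:tree12} essentially verbatim. The crucial structural observation is that every arrow there is either an \emph{addition} of voters or a \emph{deletion} of voters, and that the attached ``green'' alternatives always form an upper set of an added voter's preference and a lower set of a deleted voter's preference. Instead of the single-valued invariant ``$f$ equals a green alternative'', I would maintain at every node the invariant that $F$ of the corresponding profile \emph{intersects} the green set $G$. Both halves of participation propagate exactly this invariant: for an upper set $G$ we have $F\cap G\neq\emptyset \iff \max_{\pref_i}F\in G$, and optimistic participation, $\max_{\pref_i}F(R+{\pref_i})\pref_i\max_{\pref_i}F(R)$, pushes the maximum up and keeps it in $G$; dually, for a lower set $G$ we have $F\cap G\neq\emptyset \iff \min_{\pref_i}F\in G$, and pessimistic participation, $\min_{\pref_i}F(R)\pref_i\min_{\pref_i}F(R-i)$, pushes the minimum down and keeps it in $G$. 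Thus each addition edge carries the invariant forward by optimistic participation and each deletion edge by pessimistic participation. The symmetric WLOG step survives because $F(R)\neq\emptyset$ forces $F(R)\cap\{\a,\b\}\neq\emptyset$ or $F(R)\cap\{\c,\d\}\neq\emptyset$, and these cases are exchanged by the automorphism $\a\b\c\d\mapsto\d\c\b\a$ fixing $R$. At every leaf the profile has a Condorcet winner $\x$, so $F=\{\x\}$; but the invariant asserts that $F$ meets a green set disjoint from $\{\x\}$, the desired contradiction. Branching nodes are handled as before: knowing only $F(R_\alpha)\cap\{\a,\b\}\neq\emptyset$, I split into the subcases $\a\in F(R_\alpha)$ and $\b\in F(R_\alpha)$, at least one of which holds, and drive each to a contradiction separately.

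To pass from $m=4$ to $m>4$ I would first establish a set-valued analogue of \Cref{lem:avoid-last}: if $F$ satisfies pessimistic participation and $B\subsetneq A$ is ranked below all of $A\setminus B$ by every voter, then $F(R)\cap B=\emptyset$. The same induction on $|N|$ works: if $\x\in F(R)\cap B$ then, since $B$ is a lower set, $\min_{\pref_i}F(R)\in B$, whence pessimistic participation gives $\min_{\pref_i}F(R-i)\in B$, i.e.\ $F(R-i)\cap B\neq\emptyset$, contradicting the inductive hypothesis. Appending bad alternatives $\x_1,\dots,\x_{m-4}$ below $\{\a,\b,\c,\d\}$ then keeps every $F$ inside $\{\a,\b,\c,\d\}$, so the four-alternative argument runs unchanged, with the upper/lower-set structure of the green labels computed within $\{\a,\b,\c,\d\}$. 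As in \Cref{thm:anonymous}, no padding is needed for $n>12$: a voting rule is defined on all electorates of $\mathcal N$, and every profile in the argument uses at most $12\le n$ voters.

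The main obstacle is the bookkeeping underlying the lift: checking that at \emph{every} edge of \Cref{fig:tree12} the green label is genuinely a top-set of the added preference or a bottom-set of the deleted preference, since the entire monotone propagation of the set-intersection invariant rests on this single fact. The only mildly delicate cases are the size-two green labels $\{\b,\d\}$ and $\{\c,\a\}$ on deletion edges, where one uses that a set containing a third-ranked alternative already has its minimum among the bottom two, so the intersection invariant is indeed equivalent to a condition on $\min_{\pref_i}F$. Once the monotone-set structure of each green label is confirmed, the optimistic/pessimistic split is automatic and the proof reduces to transcribing \Cref{fig:tree12} with ``$=$'' replaced by ``$\in$''.
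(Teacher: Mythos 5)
Your proposal is correct and follows exactly the paper's own route: the paper's proof likewise says to replay the diagram of \Cref{fig:tree12}, invoking optimistic participation on the $+$ edges and pessimistic participation on the $-$ edges, and obtains the possibility by viewing the single-valued rule of \Cref{thm:positive-pairwise} as a resolute set-valued rule. Your elaboration of the intersection invariant, the upper-set/lower-set structure of the green labels, and the set-valued analogue of \Cref{lem:avoid-last} just makes explicit the bookkeeping the paper leaves implicit, and it checks out.
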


\begin{proof}
	Use the proof of \Cref{thm:anonymous}, invoking optimistic participation for edges labelled with the addition of a voter ($+$), and invoking pessimistic participation for edges labelled with removal of a voter ($-$). On the other hand, the (single-valued) voting rule of \Cref{thm:positive-pairwise} clearly satisfies both optimistic and pessimistic participation.
\end{proof}

The preference extension combining the optimistic and pessimistic preference extension is also known as the \textit{Egli-Milner extension}. 

\section{Probabilistic Voting Rules}
\label{sec:probabilistic}

A \textit{probabilistic voting rule} (also known as a \textit{social decision scheme}) assigns to each preference profile $R$ a probability distribution (or \textit{lottery}) over $A$. Thus, a probabilistic voting rule might assign a fair coin flip between $\a$ and $\b$ as the outcome of an election. 

Formally, let $\Delta(A) = \{ \mathbf p : A \to [0,1] : \sum_{\x\in A} \mathbf p (\x) = 1 \}$ be the set of lotteries over $A$; a lottery $\mathbf p \in \Delta(A)$ assigns probability $\mathbf p (\x)$ to alternative $\x$. A probabilistic voting rule $f$ is a function $f : \mathcal R^\electorates \to \Delta(A)$. In this context, we say that $f$ is a \textit{Condorcet extension} if $f(R)$ puts probability 1 on the Condorcet winner of $R$ whenever it exists: if $R$ admits $\x$ as the Condorcet winner, then $f(R)(\x) = 1$. 

As in the set-valued case, we need a notion of comparing outcomes in order to extend the definition of participation. Here, we use the concept of \textit{stochastic dominance (SD)}.

\begin{definition}
	Let ${\pref}\in\mathcal R$ be a preference relation over~$A$, and let $\mathbf p, \mathbf q \in \Delta(A)$ be lotteries. Then $\mathbf p$ is (weakly) SD-preferred over $\mathbf q$ by $\pref$ if for each alternative $\x$, we have
	\[ \textstyle \sum_{\y\pref\x} \mathbf p (y) \ge  \sum_{\y\pref\x} \mathbf q (y). \]
	In this case, we write $\mathbf p \pref^{\!\textup{SD}} \!\mathbf q$.
\end{definition}

For example, the lottery \smash{$\frac23\a + \frac13\c$} 
is SD-preferred to the lottery $\frac13\a+\frac13\b+\frac13\c$ by a voter with preferences $\a\b\c\d$. A voter with preferences $\b\a\c\d$ will feel the other way around. The main appeal of stochastic dominance 
stems from the following equivalence: 
$\mathbf p \pref^{\!\textup{SD}} \!\mathbf q$
if and only if $\mathbf p$ yields at least as much von-Neumann-Morgenstern utility as $\mathbf q$ 
under 
any utility function that is consistent with the ordinal preferences $\pref$.
Using this notion of comparing lotteries, we can define participation analogously to the previous settings.

\begin{definition}
	A probabilistic voting rule $f$ satisfies \emph{strong SD-participation} if $f(R) \pref_i^{\!\textup{SD}}\! f(R - i)$ for all $R\in\mathcal{R}^N$ and $i\in N$ with $N\in\electorates$. 
\end{definition}

Any (single-valued) voting rule $f$ can be seen as a probabilistic voting rule that puts probability 1 on its chosen alternative. If $f$ satisfies participation, then this derived probabilistic voting rule is easily seen to satisfy strong SD-participation. Hence \Cref{thm:positive-pairwise} gives us a probabilistic Condorcet extension that satisfies strong SD-participation for $n\leq 11$ voters and $m = 4$ alternatives.

We now establish a connection between strong SD-participation and the set-valued notions of participation that we considered in \Cref{sec:set-valued}. This connection will allow us to translate the impossibility results we obtained there to the probabilistic setting. To set up this connection, let us define the \textit{support} of  a lottery $\mathbf p \in \Delta(A)$ to be $\supp(\mathbf p) := \{ \x\in A : \mathbf p(\x) > 0 \}$.

\begin{proposition}
	Let $f$ be a probabilistic voting rule satisfying strong SD-participation. Let $F = \supp \circ f$ be the support of $f$, \ie $F(R) = \supp(f(R))$ for all profiles $R$. Then $F$ satisfies both optimistic and pessimistic participation.
\end{proposition}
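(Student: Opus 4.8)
The plan is to show that the support operation $F = \supp \circ f$ preserves both participation notions by unpacking what strong SD-participation says at the endpoints of the preference order. The key observation is that the SD relation $\mathbf p \pref_i^{\!\textup{SD}}\!\mathbf q$ controls the cumulative probability mass above each alternative, and in particular it pins down which alternatives can be the optimist's most-preferred and the pessimist's least-preferred choices in the support. So I would extract two consequences of the SD inequality $\sum_{\y\pref_i\x}\mathbf p(\y)\ge\sum_{\y\pref_i\x}\mathbf q(\y)$ (applied to $\mathbf p = f(R)$ and $\mathbf q = f(R-i)$), one for each extension.

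First I would handle optimistic participation, which asks that $\max_{\pref_i} F(R+\pref_i)\pref_i \max_{\pref_i}F(R)$; equivalently, phrasing everything in terms of adding a voter, that the best alternative in the support cannot get worse. Let $\x^\star = \max_{\pref_i} F(R)$ be the $\pref_i$-best alternative in $\supp(f(R))$. By definition of support, $f(R)(\x^\star) > 0$, and no alternative strictly $\pref_i$-above $\x^\star$ receives positive probability, so $\sum_{\y\pref_i\x^\star} f(R)(\y) = f(R)(\x^\star) > 0$ when we include $\x^\star$ itself, or more carefully $\sum_{\y\succ_i\x^\star} f(R)(\y) = 0$. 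The SD condition forces $\sum_{\y\succ_i\x^\star}\mathbf p'(\y)\ge$ the corresponding sum for $f(R)$ where $\mathbf p'$ is the outcome with the extra voter, which gives positive mass weakly $\pref_i$-above $\x^\star$ in the new support, whence the optimist's best alternative does not drop below $\x^\star$. Translating the definition from removing to adding a voter (as the paper does right after the participation definition) makes the directions line up.

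For pessimistic participation, which requires $\min_{\pref_i}F(R)\pref_i\min_{\pref_i}F(R-i)$, I would run the symmetric argument at the bottom of the order. Let $\z_\star = \min_{\pref_i} F(R-i)$ be the worst supported alternative after removal. Then $\sum_{\y\pref_i\z_\star}f(R-i)(\y) = 1$ since everything weakly above the worst element carries all the mass, while nothing strictly below does. Applying the SD inequality at $\z_\star$ to $f(R)\pref_i^{\!\textup{SD}}\!f(R-i)$ shows $\sum_{\y\pref_i\z_\star}f(R)(\y)=1$ as well, so no alternative strictly $\pref_i$-below $\z_\star$ lies in $\supp(f(R))$; hence the worst supported alternative of $f(R)$ is weakly $\pref_i$-preferred to $\z_\star$, which is exactly pessimistic participation.

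The main obstacle — really just a bookkeeping hazard rather than a deep difficulty — is keeping the quantifier directions and the weak-versus-strict boundary cases straight, since the optimistic and pessimistic definitions are stated with opposite orientations (one adds a voter, one removes a voter) and the SD inequality ranges over \emph{all} alternatives $\x$ but only the extremal choices $\x^\star$ and $\z_\star$ give useful information. I would therefore be careful to apply the inequality at precisely the right cutoff alternative and to handle the inclusion of the cutoff itself (i.e.\ whether $\y\pref_i\x$ is read as weak) consistently with the paper's convention, so that the "$=1$" and "$>0$" conclusions are genuinely forced.
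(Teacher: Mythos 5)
Your proof is correct and takes essentially the same approach as the paper: for optimistic participation you apply the SD inequality at the cutoff $\x^\star=\max_{\pref_i}\supp(f(R))$ to conclude that the upper set $U=\{\y:\y\pref_i\x^\star\}$ retains strictly positive probability after the voter joins (exactly the paper's argument), and your pessimistic case correctly supplies the symmetric argument that the paper dismisses as ``similar.'' The only blemish is the momentary slip where you state the SD inequality over the strict set $\{\y:\y\succ_i\x^\star\}$ (which gives only the vacuous bound $\ge 0$) rather than the weak cutoff, but you compute and use the weak version correctly and explicitly flag this boundary issue yourself, so the argument is sound.
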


\begin{proof}
	We only verify optimistic participation; the pessimistic case is similar. Let $R$ be a preference profile with electorate $N \in \electorates$, and let $i\in\mathcal N \setminus N$ be a voter with preferences $\pref_i$. Let $\x = \max_{\pref_i} F(R)$, and let $U = \{\y: \y\pref_i\x\}$. We need to show that $\max_{\pref_i} F(R + {\pref_i}) \pref_i \x$, by finding an alternative $\y\in U$ that is in the support of $f(R + {\pref_i})$.
	
	But since $f$ satisfies strong SD-participation, we have 
	\[\textstyle \sum_{\y\in U} f(R + {\pref_i})(\y) \ge \sum_{\y\in U} f(R)(\y) > 0,\] 
	where the strict inequality follows from the definition of the support and of $\x$. Hence some alternative from $U$ is in the support of $f(R + {\pref_i})$, as required.
\end{proof}

Putting these results together with the impossibility result of \Cref{thm:em-participation}, we obtain the following.

\begin{theorem}
	\label{thm:sd-participation}
	There is no probabilistic Condorcet extension that satisfies strong SD-participation for $n\ge 12$ and $m\ge 4$. On the other hand, for $m = 4$ and $n\leq 11$, such a probabilistic voting rule exists. 
\end{theorem}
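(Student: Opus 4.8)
The plan is to establish Theorem~\ref{thm:sd-participation} by combining the Proposition immediately preceding it with the impossibility result of Theorem~\ref{thm:em-participation}, rather than by exhibiting a fresh proof diagram. First I would observe that the positive part is already available for free: any single-valued voting rule satisfying participation lifts to a probabilistic voting rule putting probability~$1$ on its chosen alternative, and such a point-mass rule trivially satisfies strong SD-participation because $f(R)\pref_i^{\!\textup{SD}}\!f(R-i)$ reduces to the ordinary preference comparison $f(R)\pref_i f(R-i)$. Hence the Condorcet extension of Theorem~\ref{thm:positive-pairwise}, which satisfies participation for $m=4$ and $n\leq 11$, directly witnesses the existence claim.

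For the impossibility direction, I would argue by contraposition. Suppose $f$ is a probabilistic Condorcet extension satisfying strong SD-participation for some $n\geq 12$ and $m\geq 4$. By the Proposition, its support map $F=\supp\circ f$ is a set-valued voting rule satisfying both optimistic and pessimistic participation. I would then check that $F$ is genuinely a set-valued Condorcet extension: whenever $R$ admits a Condorcet winner $\x$, the probabilistic Condorcet-consistency of $f$ forces $f(R)(\x)=1$, so $\supp(f(R))=\{\x\}=F(R)$, as required. This puts $F$ exactly into the hypotheses of Theorem~\ref{thm:em-participation}, which asserts that no set-valued Condorcet extension can satisfy optimistic and pessimistic participation simultaneously once $m\geq 4$ and $n\geq 12$. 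The existence of $F$ therefore contradicts Theorem~\ref{thm:em-participation}, and no such $f$ can exist.

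The main subtlety — the place where the argument could slip — is verifying that the Proposition's conclusion is \emph{exactly} what Theorem~\ref{thm:em-participation} requires, namely \textit{both} notions of participation for the \textit{same} rule $F$, and that $F$ inherits Condorcet-consistency. The Proposition handles the two participation notions in a single breath, so the only extra step is the Condorcet-consistency check above, which is immediate from the definition of a probabilistic Condorcet extension. Beyond that, the whole proof is a short chaining of already-proven facts, so I would keep it to two or three sentences citing the Proposition, Theorem~\ref{thm:em-participation}, and Theorem~\ref{thm:positive-pairwise}, and would not reproduce any diagram.
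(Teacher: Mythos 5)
Your proposal is correct and matches the paper's own argument, which likewise derives the impossibility by composing the support-map Proposition with Theorem~\ref{thm:em-participation} and obtains the positive part from the point-mass lift of the rule in Theorem~\ref{thm:positive-pairwise}. Your explicit check that $F=\supp\circ f$ inherits Condorcet-consistency is a small detail the paper leaves implicit, but it is the same proof.
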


\thmref{thm:sd-participation} resolves an open problem mentioned by \citet[][Sec.~6]{BBH15b}.

\section{Conclusions and Future Work}

We have given tight results delineating in which situations no-show paradoxes must occur. 
As such, our results nicely complement recent advances to satisfy Condorcet-consistency and participation by exploiting uncertainties of the voters about their preferences or about the voting rule's tie-breaking mechanism \citep{BBGH15a,BBH15b,BBH15c}. 

Due to unmanageable branching factors when there are $5$~alternatives (and hence $5!=120$ possible preference relations), we were unable to check using our approach whether no-show paradoxes occur with even less voters when the number of alternatives grows.
It would be interesting to gain a deeper understanding of the computer-generated Condorcet extension that satisfies participation for up to $11$~voters. So far, we only know that it (slightly) differs from all Condorcet extensions that are usually considered in the literature. As a first step, it would be desirable to obtain a representation of this rule that is more concise than a look-up table.

Another interesting topic for future research is to find optimal bounds for a variant of the no-show paradox due to \citet{SaZw09a}, in which participation is weakened to half-way monotonicity. 
Their proof requires 
$702$ voters.

\section*{Acknowledgments}
Christian Geist is supported by Deutsche Forschungsgemeinschaft under grant {BR~2312/9-1}. Dominik Peters is supported by EPSRC. Part of this work was conducted while Dominik Peters visited TUM, supported by the COST Action IC1205 on Computational Social Choice. 

\bibliographystyle{abbrv}

\end{document}